\newtheorem{theorem}{Theorem}
\newtheorem{lemma}{Lemma}
\newtheorem{fact}{Fact}
\newcommand{\defeq}{\stackrel{def} {\equiv} }
\newcommand{\foo}[3]{
\begin{codebox}
\Procname{$\proc{#1}(#2)$}
#3
\end{codebox}
}
\newcommand{\call}[2]{\attribii{#1}{\proc{#2}}}
\newcommand{\SELECT}{\kw{SELECT} }
\newcommand{\INSERT}{\kw{INSERT} }
\newcommand{\DELETE}{\kw{DELETE} }
\newcommand{\SQLOR}{\kw{OR} }
\newcommand{\SQLAND}{\kw{AND} }
\begin{document}
\title{Algorithm for Invalidation of Cached Results of Queries to a Single Table}
\author{
  Jakub Łopuszański \footnote{
    Department of Computer Science and Mathematics, University of Wrocław
  }
}
\date{}
\maketitle

\abstract{
One of the most popular setups for a back-end of a high performance website consists of a relational database and a cache which stores results of performed queries.
Several application frameworks support caching of queries made to the database, but few of them handle cache invalidation correctly, resorting to simpler solutions such as short TTL values, or flushing the whole cache after any write to the database. In this paper a simple, correct, efficient and tested in real world application solution is presented, which allows for infinite TTL, and very fine grained cache invalidation. Algorithm is proven to be correct in a concurrent environment, both theoretically and in practice.
}

\begin{quote}
\emph{``There are only two hard things in Computer Science: cache invalidation and naming things.''}

\flushright{-- Phil Karlton}
\end{quote}

\section{Introduction}
\label{Introduction}
As PHP, MySQL and Memcached are technologies which can be used free of charge, many startups have chosen these technologies only to find several years later that as the network traffic grows
so does the congestion at the database, unless one caches data quite aggressively.
This is probably true for any triple of a scripting language, a relational database, and a distributed hash table.
In such setups there are multiple front-end machines running an application written in a scripting language, which communicates with the back-end
which consists of one or more databases and one or more cache servers.

It is also a good idea to have cache service running on each of the front-end servers for caching data which must be available quickly without a delay introduced by a network communication.
Such a local cache is private to the machine which hosts it, which means that only this particular server can communicate with it.

Relational databases are believed to be stable, coherent, permanent storage, but slow, and not so easy to shard and scale, while caches are fast and very easy to scale due to their key-value architecture, but diskless and volatile.

As large databases tend to work slower the more traffic they have to serve, they are often replicated or split into smaller parts called \emph{shards}.
Since performing \kw{JOIN}s across sharded tables is difficult and rarely supported by databases, they are usually performed by application run on a front-end machine, which is quite more expensive, but more scalable and tractable if input and output data is cached properly.

Therefore it is a common strategy to store the result of a query in the cache, using the query as the key. This gives the best of both worlds, as one can shard cache into multiple servers while the data is safely stored in a single easy to maintain relational database.

The most difficult part of caching becomes then to know when to remove stale data from a cache, and this paper is dedicated to this problem.

If freshness and integrity of query results is not so important, then one can simply associate a Time To Live (TTL) attribute with each key-value pair stored in the cache. Memcached protocol supports this approach by supporting TTLs natively \cite{MemcachedProtocol}, so does Mysqlnd project \cite{MySQLNDQC} among many others.

Another option is to flush whole cache each time the database is modified, which while correct, seems too aggressive. Actually built-in MySQL cache uses this strategy \cite{MySQLCache}, which does not help much if a database handles several hundreds writes per second.

A correct solution which requires some extra work from a developer, but offers a reasonable balance between the two above approaches, is to manually delete only those query-result pairs from the cache which might be affected by a particular write to the database.

This task can be greatly automated, but may be ineffective if the number of queries which require invalidation is large or even unknown.
Some frameworks try to maintain an additional index which somehow connects records and cached queries together, so that invalidation can be performed automatically.
This introduces new problems, rarely solved correctly. 
One particularly wrong algorithm is to store for each record of the table an information about all cached queries which contain this record.
Although this helps handle deletion of records, it does not handle inserts well, as there is not enough information about queries which should return the newly inserted record.
This issue can be addressed by a more sophisticated data structure which holds some meta data for gaps between rows scanned during a query.
Another problem is where to store such metadata. Cache is volatile, so we risk data integrity if we decide to store this index in cache. Database is slow (which is the reason we use caching at all in the first place), so storing the mapping there actually doubles the load.

To see why this can be a problem consider a simple 3-dimensional relation $User \times Game \times Date$ which stores information about dates when users played games.
Let say the system generates following kinds of queries:

\begin{enumerate}
\item \kw{INSERT} \kw{INTO} \id{Played} (\id{User},\id{Game},\id{Date}) \\ \kw{VALUES} (:U,:G,:D)
\item \kw{SELECT} \kw{COUNT}(*) \kw{FROM} \id{Played} \\ \kw{WHERE} \id{Game} = :G \kw{AND} \id{Date} = :D
\item \kw{DELETE} \kw{FROM} \id{Played} \\ \kw{WHERE} \id{User} = :U
\end{enumerate}

where U,G,D are parameters which can vary from query to query.
Observe that any \DELETE query of the above form must invalidate cached results to all \SELECT queries.
\begin{figure}[!htb]
\includegraphics[width=\columnwidth]{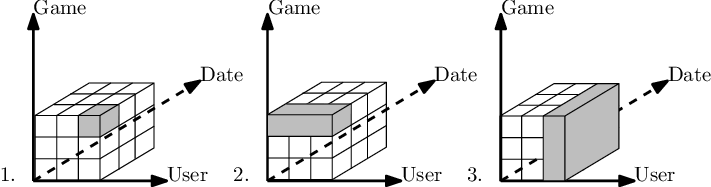}
Visualizations of subspaces for following queries
\begin{enumerate}
\item \INSERT \kw{INTO} \id{Played} (\id{User},\id{Game},\id{Date}) \\ \kw{VALUES} (2,2,0)
\item \SELECT \kw{COUNT}(*) \kw{FROM} \id{Played} \\ \kw{WHERE} \id{Game} = 2 \kw{AND} \id{Date} = 0
\item \DELETE \kw{FROM} \id{Played} \\ \kw{WHERE} \id{User} = 2
\end{enumerate}
\label{3DExample}
\end{figure}
Geometrically this is because the 1-dimensional space
scanned by \SELECT query intersects with the 2-dimensional subspace deleted by \DELETE query regardless of particular parameters U,G,D.
Examples of subspaces corresponding to the three types of queries are depicted in Figure \ref{3DExample}.

A solution which would try to keep track of all the dependencies between queries explicitly and invalidate cached results of every \SELECT one by one after every \DELETE would have to take a lot of time. On the other hand neglecting to do so immediately, may result in serving stale data from cache. Also the dependency graph can easily become larger than the original table itself. Ideas in which dependencies are tracked at record level may need even more memory. 

To solve the problem correctly and efficiently, another solution called \emph{generational keys} can be employed.
It seems to be a part of the folklore and it is hard to track back the origins of it -- see \cite{GenerationalCache},\cite{GenerationalCache2},\cite{GenerationalCache3} for example.
The key idea is that we think about the keys which need to be invalidated together as generations.
Whenever the data in the database changes in a way which should invalidate whole generation, we simply increment the number of the current generation called \emph{revision number}.
Each time we read the data from cache, we also fetch the revision number, and if the data stored in cache belongs to the old generation, we simply ignore it.
This technique allows invalidating multiple keys in the cache using a single increment operation.

The original formulation of this technique made the revision number a part of the key, which had some drawbacks. 
Fetching any data had to be done in two round trips : first to get the revision number, second to get the actual data.
Another problem was a huge cache pollution caused by many no longer used key-value pairs from old generations \cite{PeepMemcached}.
Theoretically this should not be an issue for an LRU cache implementation, but in practice slab memory allocator and lazy garbage collector used in Memcached were affected by this.
Here we propose and use a different approach in which the revision number is stored in the value, not in the key, which solves both problems.

This technique can be used as follows for our simplistic example of \id{Played} table. We store the revision number in a cache accessible from all front-end machines. Each time the \DELETE is performed, the revision number is incremented. Cached results of \SELECT queries could be then versioned, by appending the current revision number to the cached result. Each time front-end server fetches the cached result from cache it must fetch the revision number as well and reject the cached result if the revision number does not match. If the revision itself is missing we reset it to the current timestamp times 1000 (or other value guaranteed to be larger than any previous).

This is a very simplistic example, which quickly becomes more and more complicated as we introduce different queries
to the system. For example, to properly handle \INSERT queries in our example, we also need a separate revision number for each pair in $Game \times Date$. The
developer has to remember that after inserting (U,G,D) she needs to increment a revision number for particular pair $(G,D)$. The routine for selects must also be adapted to check for this additional revision number. 
Keeping track of revision numbers which need to get incremented and verified can become a maintenance nightmare and source of hard to reproduce bugs, when left to humans.

This paper proposes a way to formalize and automate this technique. The solution presented in this paper is:
\begin{description}
\item [correct] -- data returned by selects are never significantly older then the moment when the select was called
\item [fast] -- the number of accessed keys in a cache during a query is a constant dependent only on the number of columns in the table
\item [fine grained] -- only the results which can not be proven fresh are invalidated
\item [practical] -- it was successfully used in two web applications with hundreds of thousands of users
\item [fully automatic] -- relieves developers from bookkeeping dependencies between queries even in the presence of arbitrary large number of different queries (for example when different \kw{ORDER}s, \kw{LIMIT}s or \kw{OFFSET}s are used) 
\end{description}
The main result of this paper is that for a $k$-dimensional relation, we need to fetch at most $1+2^k$ keys from a cache in order to get the cached result of a query, and that after each write to the table we need to increment only $2^k$ keys to invalidate old results.
Even though this number is already small, these can be implemented as a single round trip to the cache server, if it supports bulk queries, as names of all keys which are referenced by the algorithm are known up front.
\subsection{Outline}
The rest of this paper is organized as follows. Section \ref{Model} describes the model of a database and caches used throughout this paper.
Section \ref{Intuitions} is intended to provide some insights into the design of the algorithm and intuitions behind it.
Section \ref{Algorithm} contains a description of the algorithm.
Section \ref{Proof} presents a proof of correctness for this algorithm.
Section \ref{Optimizations} sketches several optimizations and possible extensions to the algorithm.
\section{The Model}
\label{Model}

Although the algorithm was successfully implemented for PHP, MySQL and Memcached it is quite general and can be used for other, even no-SQL, databases, and different DHT implementations. Therefore, let us define some abstractions of a database, a query, and a cache.
The algorithm will handle caching of queries to a single table with $k$ columns of arbitrary type.
In practice, because of sharding and caching, \kw{JOIN}s are often performed at front-end machines, and not by the database itself, so the limitation of our considerations to a single table is justified.
Perhaps, one might apply this algorithm to a result of joining a few tables together, but this would certainly require some extra conceptual work.
Let us fix a $k$-dimensional space $S$, where each dimension corresponds to a single column of the only table in our database.
We do not require a type of a column to be numeric, as it is enough for us to be able to serialize its values to a string, and test them for equality.

\begin{definition}
A \emph{record} is a $k$-dimensional vector in $S$. If $r$ is a record then $r[i]$ for $i=1,\ldots,k$ is its $i$-th field.
\end{definition}

\begin{definition}
A \emph{table} is a finite set of records. 
\end{definition}

\begin{definition}
A \emph{query} is a record in which special wildcard placeholders * can occur.
\end{definition}

\begin{definition}
A \emph{subspace of a query} $q$, is defined as 
$$subspace(q) = \{ v \in S \: | \: \forall i . \: v[i]=q[i] \lor *=q[i]\} \enspace .$$ 
\end{definition}

\begin{definition}
  A \emph{database} is data structure which contains a \id{table} and exposes interface functionally equivalent to the pseudocode specification below:
\foo{select}{query}{
\zi \Return $\id{table} \cap \func{subspace}(\id{query})$
}
\foo{delete}{query}{
\zi $\id{table} \leftarrow \id{table} \setminus\, \func{subspace}(\id{query})$
}
\foo{insert}{record}{
\zi $\id{table} \leftarrow \id{table} \cup \{\id{record}\}$
}
\end{definition}

\begin{definition}
  A \emph{cache} is a data structure which contains a key-value \id{mapping} and exposes interface functionally equivalent to the pseudocode specification below:
\foo{get}{key}{
\zi \Comment key can be of any type which can be serialized to a string
\zi \If {$\id{key} \in \id{mapping}$}
\zi \Then
     \Return $\id{mapping}[\id{key}]$
\zi \Else
\zi    \Comment a special constant indicating a miss
\zi    \Return $\const{undefined}$
    \End
}
\foo{multiget}{keys}{
\zi \Comment this is expected to be faster than separate calls to get, 
\zi \Comment but does not have to be atomic
\zi \Return $\proc{map} \enspace \proc{get} \id{keys}$
}
\foo{set}{key,value}{
\zi \Comment value can be of any type which can be serialized to a string
\zi $\id{mapping}[key] \leftarrow \id{value}$
}
\foo{add}{key,value}{
\zi \Comment this function must be atomic
\zi \If {$\id{key} \in \id{mapping}$}
\zi \Then
     \Return $\const{false}$
\zi \Else
\zi    $\proc{set}(\id{key},\id{value})$
\zi    \Return $\const{true}$
    \End
}
\foo{increment}{key}{
\zi \Comment this function must be atomic
\zi \If {$\id{key} \in \id{mapping}$}
\zi \Then
     $\id{value} \leftarrow \proc{get}(\id{key})+1$
\zi   $\proc{set}(\id{key},\id{value})$
\zi   \Return $\id{value}$
\zi \Else
\zi   \Return $\const{undefined}$
    \End
}
\foo{delete}{key}{
\zi $\id{mapping}[key] \leftarrow \const{undefined}$
}
\end{definition}

\begin{definition}
A call to \proc{add}, \proc{set} or \proc{increment} methods of a cache can be \emph{successful} or not.
An \proc{add} is successful iff it returned \const{true}, an \proc{increment} is not successful iff it returned \const{undefined} and \proc{set} is always successful.
\end{definition}

\begin{definition}
An \emph{eviction} of a key from a cache is a situation in which a cache server run out of storage and had to delete the key from its memory to make room for new data.
\end{definition}
\begin{definition}
A \emph{horizon} of a cache is a lower bound for the time elapsed between putting a key in the cache and the moment it gets evicted from it. 
\end{definition}

In practice for LRU caches horizon greatly depends on the length of the LRU queue and the frequency of unique writes to it, and is rarely smaller than several hours.

In order to model evictions, we allow a cache to spontaneously call $\proc{delete}(key)$ for any $key$ at arbitrary chosen time, but not before the thread which called $\proc{set}(key,\ldots)$ or $\proc{add}(key,\ldots)$ finished executing our algorithm. As serving a single user rarely takes more than half a second, and horizon tends to be measured in hours, this model is quite realistic. It allows us to focus on concurrency issues more than on a reliability of a cache as a storage.

The above definition of a query does not reflect the whole potential of SQL, but is enough to model many CRUD and ORM systems.
The vocabulary consisting of \proc{select}, \proc{delete}, and \proc{insert} is quite restricted to make the presentation of the algorithm simpler, but can be extended to handle keywords such as \kw{MAX}, \kw{COUNT}, \kw{LIMIT}, \kw{OFFSET} or \kw{UPDATE} with a little extra effort.

In particular this definition allows only equality constraints in the select query, but the idea can be applied to queries containing other constraints by virtually rewriting them into a two stage queries : first we use equality constraints (if any) to limit the resulting set, then we further filter it by other constrains. This rewriting operation is just for the purpose of analysis and does not have to be implemented.
That is if an application performs\\
\SELECT \kw{COUNT}(*) \kw{AS} \id{cnt} \\
\kw{FROM} \id{Played} \\
\kw{WHERE} \id{Date}>123456 \\
\kw{AND} \id{User}=2\\
\kw{GROUP BY} \id{Game}\\
\kw{ORDER BY} \id{cnt}\\
then for the purpose of the analysis we will model that as $\proc{select}((2,*,*))$ as the only thing that is important for us is
the scope scanned by the query, and this does not depend on \kw{GROUP}ing, \kw{ORDER}ing, nor \kw{COUNT}ing.
It does however depend on the inequality $\id{Date}>123456$ but our simple algorithm will not be able to take any advantage from this constraint.

Similarly, one can often mentally emulate \kw{UPDATE} with \SELECT followed by \DELETE and \INSERT.
In general to apply the algorithm from this paper, for each query we need to know what is the smallest subspace containing all records it reads and what is the smallest subspace which contains all the records it deletes or creates.
That is, if one thinks about records in terms of points in a space, we need a bounding box. 
Intuitively if a bounding box of a read query intersects a bounding box of a write query, then the later one should invalidate cached results of the first one.
The smaller the box, the less it interferes with other queries, so we can use the tightest upper bound we can prove for a particular query.
Our algorithm can not infer any additional knowledge from the condition $age>21$ so it simply ignores it, and does not narrow the subspace along the age axis at all.

We assume a database and two caches, called local and global, to be available from each front-end machine which executes the algorithm.
In some applications local and global might be different names of the same cache, but having a separate instance of a local cache at each of the front-end machines
reduces the problem with network latency and congestion. The drawback is that each of these instances is private to a front-end machine which hosts it and can not be accessed from other machines, which imposes some difficulties with cache invalidation. For example if front-end node A performs a delete, then a front-end node B is not aware of this change. Therefore local cache is a good place to store information which does not change in time, but quite risky for things which change a lot. Our algorithm will take advantage of local caches, but can be used as well in environments in which they are not available by simply using the global cache in place of the local cache.
\section{Problem Statement}
\label{Problem}
The problem is to design a data structure which uses the original database and two caches : local and global, to implement the same interface as the database.
We require the new data structure to return fresh results, which means that \SELECT operation performed at moment $t$ should provide the result which the original database would give at some moment $t' \ge t-\epsilon$ for the same query. 
Here $\epsilon$ is a small constant, which is an upper bound on the execution time of the algorithm (think: milliseconds).
The goal is to minimize the number of queries to the original database, and number of queries and round trips required for communication with caches.

\section{Intuitions}
\label{Intuitions}

\begin{figure}[!htb]
\includegraphics[width=\columnwidth]{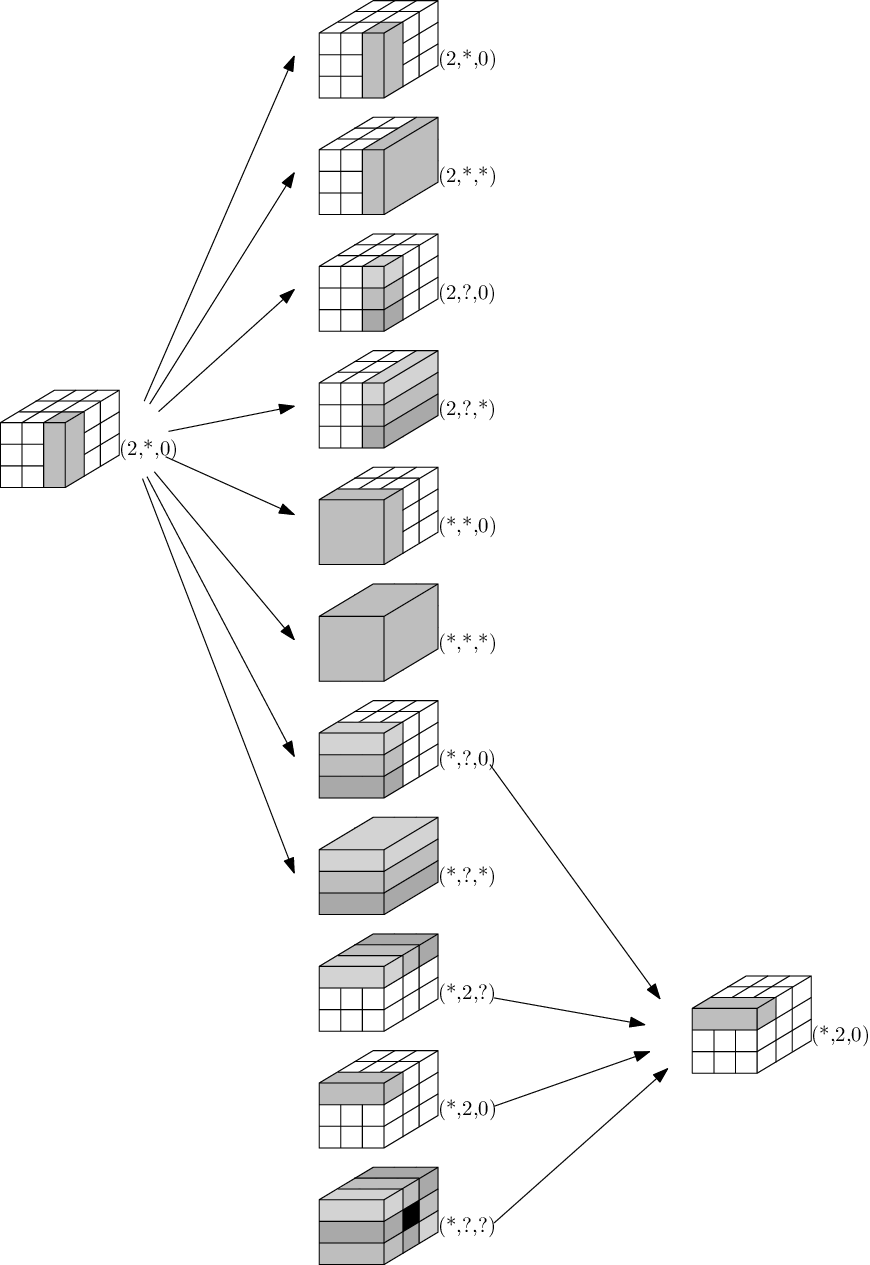}
On the left we see a node which represents a subspace of query \mbox{\DELETE \kw{FROM} xyz \kw{WHERE} x=2 \kw{AND} z=0}.
On the right we see a node corresponding to a query \mbox{\SELECT * \kw{FROM} xyz \kw{WHERE} y=2 \kw{AND} z=0}.
In the middle we see all nodes which get invalidated after the \DELETE, as well as all nodes which are checked before the \SELECT.
Since subspaces of the two queries intersect, so do their neighborhoods in the graph.
\label{Boxes}
\end{figure}

If you think about records as points in a $k$-dimensional space, and see queries as subspaces, then it is easy to see that
if the result of a query $A$ depends only on $subspace(A)$ and a query $B$ adds or removes points only within $subspace(B)$ which is disjoint from $subspace(A)$, then $B$ has no influence on the results of $A$.

Therefore, we will be on the safe side if after a write query $X$ which affects $subspace(X)$ we will invalidate cached results of all queries $Y$,
such that $subspace(Y)$ and $subspace(X)$ intersect.
The problem is to do it quickly, correctly and without tons of memory to keep track dependencies.

To see where the difficulty lays imagine a directed bipartite graph in which both layers contain one node for each possible query.
Left side represents write queries, and right side represents read queries.
Edges in the graph will represent dependence -- there is an edge from $X$ to $Y$, iff $subspace(X) \cap subspace(Y) \neq \emptyset$.
Observe several important facts:
\begin{itemize}
  \item number of nodes (possible queries) is infinite if at least one of the dimensions is infinite 
  \item number of edges is even larger than the number of nodes
  \item number of edges outgoing from a single write query may be infinite
  \item number of edges incoming to a single read query may be infinite
\end{itemize}
The last two points shed some light on the issues which can be faced when invalidating cached results after a single write query, or when checking freshness of cached results during a read query.
It is not to say that a system would have to track infinite amount of data, but, non the less, it seems to be bounded only by the number of different queries which the system is able to generate, and the bound is not even linear.

The proposed algorithm will add a third, intermediate layer to the graph. 
Roughly speaking, we want to factorize the dense dependency matrix of the original bipartite graph into two sparser matrices. 
This third layer contains all possible queries, as well as all vectors which resemble queries with $?$ placeholders in arbitrary places.
Figure \ref{Boxes} depicts a small portion of such a tripartite graph for $k=3$ containing neighborhood of two vertices from left and right layer, in order to demonstrate how the middle layer provides a bridge between them.
These additional nodes will serve as a junction points between a write query and multiple read queries.
To be more precise, a node in the middle layer will be connected to all nodes in the right layer which represent queries resembling the center node except for a few numbers in places of ? placeholders. For example a middle node $(*,?,1)$ will be connected to $(*,0,1),(*,1,1),(*,2,1)\ldots$ on the right.
The intuition behind this is that often we want to invalidate all queries which differ only by a parameter at a particular position.
In the example mentioned in Section \ref{Introduction} the \DELETE query $(U,*,*)$ should invalidate results of all \SELECT queries of the form $(*,?,?)$.
Of course there are other results which also need to be invalidated, which will be represented in our new graph by edges from left layer to the middle layer.
More precisely a node in the left layer will be connected to all nodes in the middle layer which resemble it except for a few * in places of non-* values and
a few ? in places of *.
So for example a query $(U,*,*)$ from the left layer should be connected to the nodes\\\mbox{(U,*,*),(*,*,*),(U,?,*),(*,?,*),(U,*,?),(*,*,?),(U,?,?),(*,?,?)}.\\
The intuition behind this is that the subspace of a write query intersects with another subspace if and only if they agree on all positions without *.
So if $i$-th coordinate of a write query is a *, then a dependent read query can contain a star or any non-* value, which we model using a ? placeholder.
If, $i$-th coordinate of a write query is a non-*, then the a dependent read query must either contain the very same value or a star at this position.
As we will see, the transitive closure of the new graph is exactly equal to the original set of edges in the bipartite graph.
While the graph is arguably larger, it has some nice features:
\begin{itemize}
\item there is exactly $2^k$ edges outgoing from any write query
\item there is at most $2^k$ edges incoming to any read query
\end{itemize}

If we associate an integer counter with each node of the middle layer, and increment it each time an incident query on the left modifies the database,
then it is enough to check if counters associated to nodes incident to the read query have not changed since the time we cached the result to know if we can use the cached result or not.

Actual algorithm has to be a little bit more complicated to handle cache misses caused by evictions in a correct way.
Some counters can be missing and we need to reinitialize them, carefully choosing a value, which must be larger then any previous value.
\section{The Algorithm}
\label{Algorithm}
We will present the algorithm as a wrapper around database and two caches, which itself implements the interface of a database.
As noted before this pseudo code is a simplification which does not deal with SQL parsing etc.
Actually there are many ORM frameworks which hide SQL manipulation from developer and adding the of our algorithm to them should be even easier.
Also the algorithm assumes that all $k$ dimensions (columns) are relevant.
Since the complexity depends on $k$ one can use some domain knowledge to limit the number of relevant columns only to those which are used in equality constraints.

Auxiliary function \proc{allVariantsOf} takes a query and substitution \id{rules} and returns all possible vectors that can be obtained by using
these rules zero or more times. 
For example (*,3,2) with rules $\{nonStar \to \enspace ?\}$ should return exactly four vectors: $[(*,3,2),(*,?,2),(*,3,?),(*,?,?)]$
\foo{allVariantsOf}{query,len,rules}{
\zi  \If {$len=0$}
\zi  \Then 
        \Return $[[\;]]$
     \End
\zi  $prefixes \leftarrow \proc{allVariantsOf}(\id{query},\id{len}-1,\id{rules})$
\zi  $extended \leftarrow \proc{map} ((\proc{append} \; \id{query}[\id{len}]), \id{prefixes})$
\zi  \If {$query[len] \notin \id{rules}$}
\zi  \Then
        \Return $extended$
     \End
\zi  $\id{alternative} \leftarrow \proc{map} ((\proc{append} \; \id{rules}[\id{query}[\id{len}]]),\id{prefixes})$
\zi  \Return $\proc{concatenate}(\id{extended},\id{alternative})$
}

The \proc{getRevisions} function returns revisions of \id{subspaces}.
It has to deal with occasional cache misses, which it fixes by trying to reset revision to a \id{monotoneValue} which is guaranteed to be larger
than the latest revision for this subspace before it was evicted from cache.
Different threads can have slightly different clock settings, and execute scripts at different speeds, so we can not assume that a \id{monotoneValue}s
multiple threads are trying to put in the \id{globalCache} are equal.
The only requirement we need here is that each thread computed a value larger than the latest value of revision just before it got evicted from the cache.
This can be done by adjusting \const{maxQueriesPerTimeStep} and assuming that horizon of the cache is large enough.

\foo{getRevisions}{subspaces}{
\zi $\id{monotoneValue} \leftarrow \proc{now}()*\const{maxQueriesPerTimeStep}$
\zi $\id{revisions} \leftarrow \id{globalCache}.\proc{multiget}(\id{subspaces})$
\zi \For {$i \in \{i | \id{revisions}[\id{i}]=\const{undefined} \}$} 
\zi \Do
     \If {$\id{globalCache}.\proc{add}(\id{subspaces}[\id{i}],\id{monotoneValue})$}
\zi   \Then
        $\id{revisions}[\id{i}] \leftarrow \id{monotoneValue}$
      \End
    \End
\zi  $\id{unknown} \leftarrow \{ \id{subspaces}[\id{i}] | \id{revisions}[\id{i}]=\const{undefined} \}$
\zi \If {$|\id{unknown}| >0$}
\zi \Then
      $missing \leftarrow \proc{getRevisions}(\id{unknown})$
\zi    $\id{revisions} \leftarrow \proc{merge}(\id{revisions}, \id{missing})$ 
    \End
\zi \Return \id{revisions}
}
A missing revision's value is added using \proc{add}.
This is important under race conditions.
Suppose there are four threads A,B,C,D.
If \proc{set} was used instead of \proc{add} then it would be possible that thread A set revision to $r$, thread B incremented it to $r+1$, then thread C restored it back to $r$, which would violate the monotonicity property of revisions and could result in thread D reading stale data, which thread B intended to invalidate.

Another important issue is that in case of \proc{add} failure we can not use our copy of \id{monotoneValue}, but rather should fetch the value added by another thread.
This is done by a recursive call, which given the assumption about long horizon, should finish successfully without further recursion.
If thread A used its \id{monotoneValue} in this case, it could happen that it was larger than the value added to the cache by thread B, and thread A would then store cached results of database query tagged by revision number which is too large. Imagine that much later, thread C, after a write to the database, increments the revision which now becomes equal to \id{monotoneValue} used long time ago by A. This could result in thread D reading stale data stored by A.

Please note that \proc{allVariantsOf} is not atomic, thus the returned revision numbers perhaps never coexisted in the cache at the same point in time.
However, each of these numbers is not smaller than maximum real value of revision seen up to the moment of the call to this function, 
and, moreover, existed in the cache at some point in time.
This is important as it implies that algorithm will refuse to use cached data which was invalidated before the call to this function,
and that it will not store cached results tagged as a version that was not yet reached.

The \proc{select} function presented below first gathers information about current version of the subspaces intersected by the query,
and then fetches the cached result. 
In case of a cache miss or mismatched versions, it forwards the call to the database.
Note that short numeric versions are stored in the \id{globalCache} while possibly large result is stored in the \id{localCache}.
Although our simplistic model results in a one-to-one correspondence between queries and their subspaces, 
the algorithm uses a \proc{digest} function such as \func{sha1} to convert the \id{query} to a unique string which serves as a \id{key} in the \id{localCache}.
This is to demonstrate how to handle more realistic situations where two different queries can have same subspace without causing
collisions in \id{localCache}. For example we could have two queries asking for different columns, sort order, limit, offset or aggregate function, but scanning the same subspace because of identical \kw{WHERE} clauses. This would be handled correctly by having \proc{digest} return two different \id{key}s for them, even though \proc{allVariantsOf} returns the same set of \id{variants}.

\foo{select}{query}{
\zi $\id{variants} \leftarrow \proc{allVariantsOf}(\id{query},\const{k},\{nonStar \to\enspace ?\})$
\zi $\id{key} \leftarrow \proc{digest}(\id{query})$
\zi $\id{revisions} \leftarrow \proc{getRevisions}(\id{variants})$
\zi $\id{version} \leftarrow \proc{join}(\id{revisions},``.")$
\zi $\id{cached} \leftarrow \call{localCache}{get}(\id{key})$
\zi \If {$\id{cached} = \const{undefined} \lor \attribii{cached}{version} \not \succeq \id{version}$}
    \Then
\zi    $\id{cached} \leftarrow \attribii{globalCache}{get}(\id{key})$
\zi    \If {$\id{cached}=\const{undefined} \lor \attribii{cached}{version} \not \succeq \id{version}$}
       \Then
\zi       $\attribii{cached}{result} \leftarrow \call{database}{select}(\id{query})$
\zi       $\attribii{cached}{version} \leftarrow \id{version}$
\zi       $\call{globalCache}{set}(\id{key},\id{cached})$
       \End
\zi    $\call{localCache}{set}(\id{key},\id{cached})$    
    \End
\zi \Return $\attribii{cached}{result}$
}
The concatenation operation \proc{join} and partial order $\succeq$ are defined so that 
$$\proc{join}(a,``.") \succeq \proc{join}(b,``.") \iff \forall i \, . \, a[i] \ge b[i]\enspace,$$
which means that one version is newer than the other. Note that $version$ computed here can in some rare circumstances do not reflect any particular moment in time, due to the non-atomicity of multiget, but the algorithm correctly deals with this issue. 
One can replace $\not \succeq$ with a simple $\neq$ without deteriorating algorithm's performance significantly, if this somehow seems more secure or easier to implement.

The \proc{invalidate} function invalidates all subspaces which intersect with subspace of a given query.
Doing so explicitly could require incrementing infinitely many revision numbers, 
as in place of a star in $\func{subspace}(\id{query})$ we should try every possible value for that column.
To reduce the number of steps to $2^k$ we use a special placeholder ? which is intended to have a meaning of \emph{any particular value}.
Therefore $\proc{invalidate}((*,2,3))$ will need to increment exactly 8 revisions:\\ 
\mbox{(*,2,3),(?,2,3),(*,*,3),(?,*,3),(*,2,3),(?,2,*),(*,*,*),(?,*,*)}

\foo{invalidate}{query}{
\zi $\id{rules} \leftarrow \{* \to\enspace ?, \id{nonStar} \to *  \}$
\zi $\id{variants} \leftarrow \proc{allVariantsOf}(query,\const{k},\id{rules})$
\zi \For {$i \in 1,\ldots,|variants|$}
\zi \Do
      $\call{globalCache}{increment}(variant)$
    \End
}

\foo{delete}{query}{
\zi $\call{database}{delete}(query)$
\zi $invalidate(query)$
}

\foo{insert}{record}{
\zi $\call{database}{insert}(record)$
\zi $invalidate(record)$
}

\section{The Analysis}
\label{Proof}
\begin{definition}
\label{EDefinition}
The \emph{query dependency graph} is a bipartite graph $<(Q,Q),E>$, where $Q$ is the set of all possible queries, and $E=\{(u,v) | \func{subspace}(u) \cap \func{subspace}(v) \neq \emptyset \}$ 
\end{definition}

\begin{definition}
\label{EEDefinition}
The \emph{revision dependency graph} is a tripartite graph $<(Q,P,Q),(E',E'')>$, where $Q$ is the set of all possible queries, $P=\{ p | \exists q \in Q . \forall i . q[i]=p[i] \lor p[i]=?  \}$ is a set of all possible queries with some of coefficients replaced with question marks, 
and $E' \subset Q \times P$ and $E'' \subset P \times Q$ are defined as follows:
$$E' = \{ (q,p) | \forall i. 
  q[i]=p[i] 
  \lor 
  (q[i]\neq * \land p[i]=*) 
  \lor
  (q[i]=* \land p[i]=?)
\}$$
$$E'' = \{ (p,q) | \forall i. 
  p[i]=q[i]
  \lor 
  (p[i]=? \land q[i]\neq *)
\}$$
\end{definition}

\begin{lemma}
\label{IntersectWhenAgree}
$$ (q,q') \in E 
\iff 
  \func{subspace}(q) \cap \func{subspace}(q') \neq \emptyset 
$$
$$
\iff
  \forall i .\enspace q[i]=q'[i] \lor q[i]=* \lor q'[i]=*
  $$
that is subspaces are disjoint if at some position $i$ they have different non-star values and that otherwise they do intersect.
\end{lemma}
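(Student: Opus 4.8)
The first equivalence in the statement is nothing more than Definition~\ref{EDefinition}, which \emph{defines} $E$ to be exactly the set of pairs $(u,v)$ with $\func{subspace}(u)\cap\func{subspace}(v)\neq\emptyset$; so I would dispose of it in one sentence and spend all the work on the second equivalence, which relates non-emptiness of the intersection to coordinatewise agreement up to stars. The plan is to unfold the definition of $\func{subspace}$ and argue coordinate by coordinate, treating the two directions separately.

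For the direction ``coordinatewise agreement $\Rightarrow$ intersection non-empty'' I would construct an explicit witness $v\in S$ and check that it lies in both subspaces. For each coordinate $i$: if $q[i]\neq *$ put $v[i]:=q[i]$; otherwise $q[i]=*$, and then put $v[i]:=q'[i]$ when $q'[i]\neq *$ and, when $q'[i]=*$ as well, let $v[i]$ be any element of the $i$-th column's value set (nonempty, as we may assume throughout). Then $v\in\func{subspace}(q)$ because at every $i$ either $q[i]=*$ or $v[i]=q[i]$ by construction. And $v\in\func{subspace}(q')$ because at every coordinate $i$ with $q'[i]\neq *$ we have $v[i]=q'[i]$: if $q[i]=*$ this is true by construction, and if $q[i]\neq *$ then the hypothesis (with the disjuncts $q[i]=*$ and $q'[i]=*$ both false) forces $q[i]=q'[i]$, so $v[i]=q[i]=q'[i]$. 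Hence $v$ witnesses $\func{subspace}(q)\cap\func{subspace}(q')\neq\emptyset$.

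For the converse I would argue by contraposition. Suppose some coordinate $i$ satisfies $q[i]\neq q'[i]$, $q[i]\neq *$ and $q'[i]\neq *$. Take any $v\in\func{subspace}(q)$; since $q[i]\neq *$, the definition of $\func{subspace}$ forces $v[i]=q[i]$, hence $v[i]=q[i]\neq q'[i]$ while $q'[i]\neq *$, so $v\notin\func{subspace}(q')$. Therefore no point lies in both subspaces, i.e.\ the intersection is empty (and it is trivially empty if $\func{subspace}(q)$ itself is empty).

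I do not expect a real obstacle: the lemma is essentially a reformulation of the definitions, and the only point requiring a little care is the wildcard case analysis in the witness construction for the first direction, in particular the case $q[i]=q'[i]=*$, where any value of the $i$-th column works and the (mild, implicit) assumption that each column's value set is nonempty is used. Once the witness $v$ is written down, membership in each of the two subspaces is a one-line check performed coordinate by coordinate.
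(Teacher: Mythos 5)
Your proof is correct and follows essentially the same route as the paper: the paper's proof is exactly your witness construction (with the same case split on stars and the same remark that any value works when both coordinates are stars), while the converse direction, which you spell out by contraposition, is left implicit in the paper as an easy consequence of the definition of $\func{subspace}$. No gaps; your version is simply a slightly more explicit write-up of the same argument.
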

\begin{proof}
This can be proven by constructing a witness $r$ which belongs to the intersection --  
$$r[i]= \left \{
\begin{tabular}{ll}
$q[i]$ & if $q[i]\neq *$ \\
$q'[i]$ & if $q[i]= * \land q'[i]\neq *$ \\
anything & otherwise
\end{tabular}   
\right .
$$
by \emph{anything} we mean here any valid value for this particular dimension, i.e. minimal possible value to avoid axiom of choice.
\end{proof}

\begin{lemma}
\label{TransitiveClosure}
The transitive closure of the revision dependency graph gives the query dependency graph, that is:
$$
(q,q') \in E \iff \exists p \in P .\enspace (q,p) \in E' \land (p,q') \in E''$$
\end{lemma}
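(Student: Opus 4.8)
The plan is to verify both implications position by position, reducing everything to the coordinate characterization of $E$ supplied by Lemma \ref{IntersectWhenAgree}: $(q,q')\in E$ holds exactly when, for every index $i$, one of $q[i]=q'[i]$, $q[i]=*$, or $q'[i]=*$ is true. The one structural fact I will lean on repeatedly is that $q$ and $q'$ are genuine queries, so each of $q[i],q'[i]$ is either a field value or $*$ and in particular is never the symbol $?$, whereas a middle-layer node $p\in P$ may carry $?$ in some coordinates.

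For the implication from right to left, suppose some $p\in P$ satisfies $(q,p)\in E'$ and $(p,q')\in E''$, fix an index $i$, and branch on the three clauses that $(q,p)\in E'$ offers at position $i$. If $q[i]=p[i]$, the $E''$ clause at $i$ gives either $p[i]=q'[i]$ (hence $q[i]=q'[i]$) or $p[i]=?$ with $q'[i]\neq*$; the latter would force $q[i]=?$, impossible for a query, so only the former survives. If $q[i]\neq*$ and $p[i]=*$, then since $p[i]\neq?$ the $E''$ clause must hold via $p[i]=q'[i]$, giving $q'[i]=*$. If $q[i]=*$ and $p[i]=?$, then $q[i]=*$ already discharges position $i$. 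In each case position $i$ meets the defining condition of $E$, so $(q,q')\in E$ by Lemma \ref{IntersectWhenAgree}.

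For the implication from left to right, suppose $(q,q')\in E$ and build an explicit witness $p$ coordinate by coordinate by a three-way split that partitions all cases: when $q[i]\neq*$ and $q'[i]\neq*$, the hypothesis forces $q[i]=q'[i]$ and I set $p[i]=q[i]$; when $q'[i]=*$ (regardless of $q[i]$), I set $p[i]=*$; and when $q[i]=*$ while $q'[i]\neq*$, I set $p[i]=?$. Then $p\in P$, since replacing every $?$-coordinate of $p$ by an arbitrary value of the corresponding dimension yields a query that witnesses membership, and a one-line inspection of each of the three cases shows that $(q,p)$ satisfies every clause of $E'$ and $(p,q')$ satisfies every clause of $E''$ — in each branch one simply reads off which disjunct was arranged to hold.

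I do not expect a genuine difficulty here; the content is a disciplined case analysis. The single point needing care — and where a careless argument would go wrong — is remembering that queries never contain $?$: this is exactly what eliminates the spurious sub-cases in the right-to-left direction and what makes the three-way construction in the left-to-right direction both exhaustive and well typed. So the ``main obstacle'' is cosmetic: choosing a case division clean enough that each branch closes in one line rather than degenerating into a large table.
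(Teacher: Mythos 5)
Your proof is correct and follows essentially the same route as the paper's: the same coordinate-wise witness $p$ (value where the queries agree, $*$ where $q'[i]=*$, $?$ where $q[i]=*$ and $q'[i]\neq*$) for the forward direction, and a per-coordinate case analysis on the disjuncts of $E'$/$E''$ reduced to Lemma \ref{IntersectWhenAgree} for the converse. Your explicit remark that queries never contain $?$ only makes precise an assumption the paper's case inspection uses implicitly.
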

\begin{proof}
$(=>)$ We can construct a tuple $p$:
$$p[i]= \left \{
\begin{tabular}{ll}
$q[i]$ & if $q[i]=q'[i]$ \\
$?$ & if $q[i]= * \land q'[i]\neq *$ \\
$*$ & if $q[i]\neq * \land q'[i]=*$
\end{tabular}   
\right .
$$
and verify using Definition \ref{EEDefinition} that $p$ is connected to $q$ and $q'$ by case inspection.

$(<=)$ we can use $p$ to show for each $i$ that it must be one of the following cases:
\begin{itemize}
\item[Case 1.] $p[i]=?$. From the definition of E' it must be that $q[i]=*$. 
\item[Case 2.] $p[i]=*$. From the definition of E'' it must be that $q'[i]=*$.
\item[Case 3.] $p[i]\notin\{?,*\}$. From the definitions of E' and E'' it must be that $p[i]=q[i]=q'[i]$.
\end{itemize}
In none of these cases it is possible for $q[i]$ and $q'[i]$ to be different non-star values, so applying Lemma \ref{IntersectWhenAgree} gives the thesis.
\end{proof}

\begin{fact}
For every $q,q'$ if $\enspace\func{subspace}(q) \cap \func{subspace}(q') = \emptyset$ then result returned by a database for \proc{select}$(q)$ stays the same after executing \proc{delete}$(q')$ on it.
\end{fact}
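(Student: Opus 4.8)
The plan is to reduce the statement to elementary set algebra by unfolding the specification of the database given earlier. By definition, $\proc{select}(q)$ returns $\id{table} \cap \func{subspace}(q)$, and the only effect of $\proc{delete}(q')$ is to replace $\id{table}$ by $\id{table} \setminus \func{subspace}(q')$. Hence it suffices to show that for every table state the equality
$$
(\id{table} \setminus \func{subspace}(q')) \cap \func{subspace}(q) \;=\; \id{table} \cap \func{subspace}(q)
$$
holds whenever $\func{subspace}(q) \cap \func{subspace}(q') = \emptyset$.

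The key step is the distributivity identity $(\id{table} \setminus \func{subspace}(q')) \cap \func{subspace}(q) = (\id{table} \cap \func{subspace}(q)) \setminus \func{subspace}(q')$, after which I would invoke the hypothesis: any record $r \in \id{table} \cap \func{subspace}(q)$ lies in $\func{subspace}(q)$, and since $\func{subspace}(q)$ and $\func{subspace}(q')$ are disjoint, $r \notin \func{subspace}(q')$. Therefore the set difference on the right removes nothing, giving $(\id{table} \cap \func{subspace}(q)) \setminus \func{subspace}(q') = \id{table} \cap \func{subspace}(q)$, which is exactly the pre-delete result of $\proc{select}(q)$.

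I do not expect a real obstacle here; the statement is essentially a sanity check that disjointness of subspaces implies non-interference of a delete with a select, and the argument is a one-line computation once the interface definitions are substituted. The only point worth stating explicitly is that the argument is uniform in the table state, so it applies no matter what other operations preceded the pair $\proc{select}(q)$, $\proc{delete}(q')$; combined with Lemma~\ref{IntersectWhenAgree} and Lemma~\ref{TransitiveClosure}, this is what legitimizes invalidating, after a write $q'$, only those read queries $q$ with $(q,q') \in E$.
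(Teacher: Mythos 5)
Your argument is correct and is exactly the justification the paper intends: the paper states this as an unproved Fact, treating it as immediate from the pseudocode definitions of \proc{select} and \proc{delete}, and your unfolding to the set identity $(\id{table} \setminus \func{subspace}(q')) \cap \func{subspace}(q) = \id{table} \cap \func{subspace}(q)$ under disjointness is precisely that one-line computation. Nothing is missing.
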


\begin{fact}
For every $q,r$ if $r \notin \func{subspace}(q)$ then result returned by a database for \proc{select}$(q)$ stays the same after executing \proc{insert}$(r)$ on it.
\end{fact}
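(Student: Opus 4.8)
The plan is to prove the statement by directly evaluating \proc{select}$(q)$ against the table contents immediately before and immediately after the single call to \proc{insert}$(r)$, using only the pseudocode specification of the \emph{database} from Section~\ref{Model}. Let $T$ be the value of \id{table} just before the insert. By the specification of \proc{select}, the result returned before the insert is $T \cap \func{subspace}(q)$. The call \proc{insert}$(r)$ replaces \id{table} with $T \cup \{r\}$, so a subsequent \proc{select}$(q)$ returns $(T \cup \{r\}) \cap \func{subspace}(q)$.

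Next I would rewrite this using distributivity of intersection over union as $\big(T \cap \func{subspace}(q)\big) \cup \big(\{r\} \cap \func{subspace}(q)\big)$, and then invoke the hypothesis $r \notin \func{subspace}(q)$, which says precisely that $\{r\} \cap \func{subspace}(q) = \emptyset$. Hence the post-insert result equals $T \cap \func{subspace}(q)$, which is exactly the pre-insert result, completing the argument. This mirrors the proof of the preceding \emph{Fact} about \proc{delete}, with union in place of set difference.

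There is no real obstacle here; the claim is an immediate consequence of the set-theoretic definitions of \proc{select} and \proc{insert}. The only point that deserves a word of care is that the statement concerns the result of \proc{select}$(q)$ evaluated on the two specific table states bracketing the one \proc{insert}$(r)$, with nothing else interleaved --- that is, it is a statement about the sequential semantics of the database object, not about the concurrent wrapper --- so the comparison of $T$ with $T \cup \{r\}$ is legitimate and no further bookkeeping is needed.
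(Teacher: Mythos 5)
Your argument is correct and is exactly the immediate set-theoretic computation the paper relies on: this statement is presented as a \emph{Fact} with no written proof precisely because it follows directly from the pseudocode definitions of \proc{select} and \proc{insert}, which is what you spell out. Nothing is missing, and your closing remark that the claim concerns the sequential semantics of the database object (not the concurrent wrapper) is the right reading of the statement.
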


\begin{fact}
For every $q,q'$ result returned by a database for \proc{select}$(q)$ stays the same after executing \proc{select}$(q')$ on it.
\end{fact}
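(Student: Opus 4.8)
The plan is to read this off directly from the pseudocode specification of the database interface given in Section~\ref{Model}. Unlike the two preceding facts, there is no geometric non-interference side condition to verify here, because \proc{select} is a pure read: the statement should follow from the mere observation that \proc{select} never mutates the database state.

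First I would note that the only mutable component of a database is its \id{table}. The interface exposes exactly three operations, and only \proc{delete} and \proc{insert} contain an assignment to \id{table}; \proc{select} simply evaluates and returns $\id{table} \cap \func{subspace}(\id{query})$ without writing anything. Next I would observe that $\func{subspace}(q)$ depends on $q$ alone, since it is defined purely in terms of the coordinates of $q$. Hence the value returned by \proc{select}$(q)$ is a function solely of the current contents of \id{table} and of the argument $q$.

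Finally, since executing \proc{select}$(q')$ leaves \id{table} unchanged, a subsequent call to \proc{select}$(q)$ observes exactly the same \id{table} as before and therefore returns the same set. The only real ``obstacle'' is bookkeeping: being explicit that the database's state is captured entirely by \id{table} and that \proc{select} has no side effect on it; once that is granted from the interface definition, the claim is immediate. (It is worth remarking that this fact, together with the two \proc{delete}/\proc{insert} facts above, is exactly the monotone-interference skeleton that the correctness proof in this section will combine with Lemmas~\ref{IntersectWhenAgree} and~\ref{TransitiveClosure}.)
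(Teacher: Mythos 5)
Your proof is correct and matches the paper's treatment: the paper states this as a fact with no written proof precisely because, as you observe, \proc{select} performs no assignment to \id{table} in the interface specification, so the claim is immediate from the definitions. Your explicit bookkeeping (state is exactly \id{table}, $\func{subspace}(q)$ depends only on $q$) is just a careful spelling-out of that same observation.
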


\begin{lemma}
The \proc{getRevisions} performs no more than one recursive call.
\end{lemma}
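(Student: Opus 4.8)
The plan is to characterize exactly when \proc{getRevisions} issues a recursive call, and then to show that whenever it does, the recursive invocation finds everything it is asked about and so does not recurse again. First, observe that a recursive call is made precisely when, after the \proc{multiget} and the \proc{add}-loop, the set \id{unknown} is non-empty, i.e.\ some index $i$ still has $\id{revisions}[i]=\const{undefined}$. For that particular $i$ two things must have happened: the \proc{multiget} returned \const{undefined} for $\id{subspaces}[i]$, and the subsequent $\id{globalCache}.\proc{add}(\id{subspaces}[i],\id{monotoneValue})$ returned \const{false}; by the specification of \proc{add}, the latter means $\id{subspaces}[i]$ was present in the global cache at the moment that \proc{add} ran.

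The crux is then to argue that each such key is still present when the recursive call's \proc{multiget} reads it. Here I would first note, by inspecting every write the algorithm makes to \id{globalCache}, that a key of ``variant'' shape (a vector possibly containing $?$ or $*$) can only ever be \emph{created} by a successful \proc{add} inside \proc{getRevisions}: the \proc{set} issued by \proc{select} uses $\proc{digest}(\id{query})$, which lives in a disjoint key space, and the \proc{set} hidden inside \proc{increment} runs only when the key already exists, so it never introduces a new key. Hence the presence of $\id{subspaces}[i]$ observed above is owed to a successful \proc{add} by some thread that preceded our failed \proc{add}. Invoking the eviction model -- a key cannot be evicted while a thread that wrote it is still running the algorithm -- together with the standing assumption that the cache horizon comfortably exceeds the maximum execution time $\epsilon$ of a single top-level operation, this key cannot disappear during the interval of length at most $\epsilon$ that separates our failed \proc{add} from the \proc{multiget} of the recursive call.

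Putting the pieces together: in the recursive call $\proc{getRevisions}(\id{unknown})$ the call $\id{globalCache}.\proc{multiget}(\id{unknown})$ therefore returns a non-\const{undefined} value for \emph{every} element of \id{unknown}, so the \id{unknown} set recomputed inside that call is empty, its guard $|\id{unknown}|>0$ fails, and it makes no further recursive call; and of course a top-level call recurses not at all when the \proc{add}-loop leaves no index \const{undefined} (for instance when every \proc{add} succeeds). Thus at most one recursive call is ever performed. The hard part is the eviction step: one must genuinely rule out that $\id{subspaces}[i]$ is evicted in the brief window before the recursive \proc{multiget} and then re-added by a concurrent thread in a way that would force a second level of recursion, and it is exactly here that the long-horizon hypothesis is indispensable -- in the form that any key resident at some instant during an operation stays resident for that operation's entire (at most $\epsilon$) duration. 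Everything else -- the characterization of when recursion fires and the disjointness of the variant key space from the \proc{digest} key space -- is a routine reading of the pseudocode.
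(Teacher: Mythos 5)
Your proof is correct and follows essentially the same route as the paper's: a failed \proc{add} witnesses that some other thread inserted the key, the algorithm never deletes revision keys, and the eviction model plus the long-horizon assumption guarantee the key survives until the recursive \proc{multiget}, which therefore leaves \id{unknown} empty. Your extra observation that the \proc{digest}-keyed \proc{set} and the \proc{set} inside \proc{increment} cannot create or disturb variant keys is a slightly more careful reading of the same point the paper compresses into ``nowhere in the algorithm do we delete any keys.''
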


\begin{proof}
The recursive call occurs only when \id{unknown} is non empty.
All the keys in variable \id{unknown} were missing during \call{globalCache}{multiGet} and were added shortly afterwards by some other thread, as \call{globalCache}{add} for them failed.
Nowhere in the proposed algorithm we delete any keys.
The only reason a key can be missing after it is \proc{add}ed by the algorithm, is an eviction.
We assumed however, that horizon is long enough, so that eviction of a key can not happen until the current thread finishes.
Therefore during the recursive call all of the keys from \id{unknown} will be still in cache, and \id{globalCache}.\proc{multiGet} will return all of them. 
\end{proof}

\begin{fact}
The proposed algorithm for \proc{select} fetches at most $2^{k+1}+1$ keys from \id{globalCache}, 1 key from \id{localCache}, performs at most $2^k$ \proc{add}s to the \id{globalCache}, and at most 1 \proc{set} to \id{globalCache} and \id{localCache}.
\end{fact}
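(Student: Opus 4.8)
The plan is a straightforward accounting argument: walk through the body of \proc{select} line by line and tally the cache operations, leaning on two things proved earlier — the (implicit) bound $|\id{variants}|\le 2^k$ coming from the structure of \proc{allVariantsOf}, and the Lemma that \proc{getRevisions} makes at most one recursive call.

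First I would pin down $|\id{variants}|$. By induction on \id{len}, \proc{allVariantsOf}$(\id{query},\id{len},\{\id{nonStar}\to ?\})$ returns a list whose length is $2^{s}$, where $s$ is the number of non-star coordinates among the first \id{len} positions: each non-star position contributes the factor-two branch (\id{extended} together with \id{alternative}), while a star position leaves the length unchanged. In particular $|\id{variants}|\le 2^k$. The calls to \proc{digest} and later \proc{join} touch no cache, so they can be ignored. Next I would analyse \proc{getRevisions}$(\id{variants})$: the top-level invocation performs one \proc{multiget} of $|\id{variants}|$ keys on \id{globalCache}, and then one \proc{add} for each index whose revision came back \const{undefined}; there are at most $|\id{variants}|$ such indices, so at most $|\id{variants}|\le 2^k$ \proc{add}s. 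By the preceding Lemma it recurses at most once, on the set \id{unknown} with $|\id{unknown}|\le|\id{variants}|$, and — by the argument used to prove that Lemma — every key of \id{unknown} is already present in \id{globalCache} during the recursive call (it was \proc{add}ed by a competing thread and, by the long-horizon assumption, cannot be evicted before the current thread finishes). Hence in the recursive call the \proc{for} loop is empty (no \proc{add}s) and \id{unknown} becomes empty (no further recursion). So \proc{getRevisions} fetches at most $|\id{variants}|+|\id{unknown}|\le 2^{k+1}$ keys from \id{globalCache}, does at most $2^k$ \proc{add}s, and no \proc{set}s.

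Finally I would account for the rest of \proc{select}: \call{localCache}{get}$(\id{key})$ is one key from \id{localCache}; and the guarded outer \proc{if} block does at most one \call{globalCache}{get}$(\id{key})$, at most one \call{database}{select} (not a cache operation, so not counted), at most one \call{globalCache}{set}, and at most one \call{localCache}{set}. Summing: at most $2^{k+1}+1$ keys fetched from \id{globalCache}, exactly one from \id{localCache}, at most $2^k$ \proc{add}s to \id{globalCache}, and at most one \proc{set} to each cache, which is the claim.

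The argument is essentially bookkeeping; the only subtle point is the recursion inside \proc{getRevisions}, where one must invoke the earlier Lemma (and its use of the no-eviction/long-horizon assumption) to conclude that the single recursive call adds nothing and does not recurse further — so that the \proc{multiget} count is $2\cdot 2^k$ rather than potentially unbounded. I would also be careful to note that the number of \const{undefined} entries is at most $|\id{variants}|$, and that \call{database}{select} is not a cache access.
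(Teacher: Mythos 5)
Your accounting is correct: $|\id{variants}| \le 2^k$, the top-level \proc{multiget} plus the single recursive \proc{multiget} plus the one \call{globalCache}{get}$(\id{key})$ give at most $2^{k+1}+1$ fetches, the \proc{add}s occur only in the first pass, and the \proc{set}s and the \id{localCache} access are bounded as you say. The paper states this as a Fact without an explicit proof, and your bookkeeping, combined with the Lemma that \proc{getRevisions} recurses at most once (and that the recursive call finds all its keys, hence performs no further \proc{add}s), is exactly the intended justification.
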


\begin{fact}
Whole communication during \proc{select} can be performed in a constant number of round trips if \id{globalCache} supports multiple \proc{add}s in a single bulk request. Otherwise, the algorithm may require up to $O(2^k)$ round trips in the worst case.
\end{fact}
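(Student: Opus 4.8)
The plan is to walk through \proc{select} line by line, charging every interaction with a cache or with the database to a round trip, and to lean on the two earlier results about \proc{getRevisions} — the Lemma stating that it performs no more than one recursive call, and the preceding Fact bounding the keys it fetches and the \proc{add}s it performs — so that I never have to unfold the recursion by hand. Throughout, I treat a single \proc{multiget} as one round trip (which the model's comment on \proc{multiget} explicitly sanctions), and I treat each non‑batched operation — a plain \proc{get}, \proc{set}, \proc{add}, or a \proc{database}.\proc{select} call — as one round trip. The local computations \proc{allVariantsOf}, \proc{digest}, \proc{join} and \proc{now} involve no communication and are ignored.

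First I would dispatch everything in \proc{select} that lies outside the call to \proc{getRevisions}: it is a constant amount of communication no matter what the caches support. Reading the body top to bottom, the worst case is one \proc{get} on \id{localCache}, then one \proc{get} on \id{globalCache}, then one \proc{database}.\proc{select}, then one \proc{set} on \id{globalCache} and one \proc{set} on \id{localCache} — at most five interactions, each touching a single key. So the whole question reduces to bounding the communication performed inside \proc{getRevisions}.

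Next I would analyse \proc{getRevisions}. Its communication consists of (i) the \proc{multiget} on \id{globalCache} at the top; (ii) the loop that issues a \proc{add} on \id{globalCache} for each initially‑missing subspace; and (iii) at most one recursive call (by the Lemma on recursion depth). For the recursive call, the argument already given in that Lemma's proof — every key in \id{unknown} has been \proc{add}ed by some thread and, under the long‑horizon assumption, cannot yet be evicted — shows that the recursive \proc{multiget} returns all of them, so the \proc{add}‑loop inside the recursive call is empty. Hence, summed over the whole of \proc{getRevisions}, the communication is exactly two \proc{multiget} calls plus at most $2^k$ \proc{add} calls (the bound $2^{k+1}+1$ on fetched keys and $2^k$ on \proc{add}s from the preceding Fact is consistent with this), and nothing else. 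If \id{globalCache} supports bulk \proc{add}s, those $\le 2^k$ adds collapse into $O(1)$ bulk requests; together with the two \proc{multiget}s and the constant overhead of the rest of \proc{select}, this gives a constant number of round trips. If it does not, the adds go out one at a time, contributing up to $2^k$ round trips, for a total of $2 + 2^k + O(1) = O(2^k)$.

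Finally, to confirm the $O(2^k)$ figure is not loose, I would exhibit the obvious extremal instance: run \proc{select}$((v_1,\ldots,v_k))$ with all $k$ coordinates non‑star against a cold \id{globalCache}. Then \proc{allVariantsOf}$(\id{query},\const{k},\{nonStar \to\enspace ?\})$ yields exactly $2^k$ distinct variants (each coordinate independently kept or turned into a $?$), all absent from the cache, so the \proc{add}‑loop fires $2^k$ separate round trips; the worst case is therefore $\Theta(2^k)$ round trips without bulk \proc{add}. The only mildly delicate point in the whole argument is ruling out that the recursive call secretly hides another $\Theta(2^k)$ batch of adds, and that is precisely what the earlier \proc{getRevisions} analysis (long horizon $\Rightarrow$ the \id{unknown} keys are still present on the recursive \proc{multiget}) already settles, so I expect no real obstacle here — the rest is bookkeeping.
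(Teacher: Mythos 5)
Your accounting is correct and matches the justification the paper leaves implicit: this Fact is stated without an explicit proof, and the intended argument is exactly your bookkeeping --- constant communication outside \proc{getRevisions}, two \proc{multiget}s plus at most $2^k$ \proc{add}s inside it (using the recursion-depth Lemma and the preceding key-count Fact), which collapses to $O(1)$ round trips with bulk \proc{add}s and $O(2^k)$ without. Your cold-cache, all-non-star worst case showing the bound can actually be attained is a small extra beyond what the paper claims, and it is sound.
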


\begin{fact}
The result returned by proposed \proc{select}$(q)$ algorithm when called at moment $t$ is equal to the result that was returned by \id{database}.\proc{select}$(q)$ at some moment $t'$.
\end{fact}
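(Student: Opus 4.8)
The idea is to reduce the statement to an invariant on what can ever be stored under the cache key $\proc{digest}(q)$, prove that invariant by induction over the operations the algorithm performs, and then read the Fact off it by a short case split on how \proc{select} produces its answer. Concretely, \proc{select}$(q)$ returns exactly $\attribii{cached}{result}$, and within one invocation this field receives its value in only one of three ways: it is whatever was read by $\call{localCache}{get}(\id{key})$, or whatever was read by $\attribii{globalCache}{get}(\id{key})$, or it is overwritten by $\call{database}{select}(\id{query})$. In the third case the claim is immediate, with $t'$ the moment that database call executes; so everything hinges on what can ever be stored under $\id{key}=\proc{digest}(\id{query})$ in either cache.

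Next I would set up the invariant: for every key $\kappa$ currently present in $\id{localCache}$ or $\id{globalCache}$, if $\kappa=\proc{digest}(q)$ for some query $q$ then the stored record $v$ has $v.\id{result}$ equal to the answer $\id{database}.\proc{select}(q)$ would have given at some earlier moment. I would then check preservation under every step, starting from the (vacuously valid) empty caches. The database operations never touch a cache; \proc{getRevisions} and \proc{invalidate} only touch variant-vector keys, which live in a namespace disjoint from digest keys, and only via $\proc{add}$, $\proc{increment}$ and $\proc{multiget}$; spontaneous evictions are $\proc{delete}$s and hence only remove keys, trivially preserving the invariant. The sole writes to a digest key are the two $\proc{set}$ lines of \proc{select}: at $\call{globalCache}{set}(\id{key},\id{cached})$ the field $\attribii{cached}{result}$ has just been assigned $\call{database}{select}(\id{query})$, a genuine answer for $\id{query}$; at $\call{localCache}{set}(\id{key},\id{cached})$ the record $\id{cached}$ is either that same freshly built record or one just fetched from $\attribii{globalCache}{get}(\id{key})$, which by the invariant already carried a genuine answer. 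Hence the invariant is maintained everywhere.

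Finally I would combine the two: if \proc{select}$(q)$ returns a value read from a cache then (the freshness guard having failed) $\id{cached}$ is a genuine record that was present at the instant it was read, so by the invariant its $\id{result}$ field equals what $\id{database}.\proc{select}(q)$ would have answered at some moment $t'$; if it returns a freshly computed value, $t'$ is the moment of that computation. The main obstacle — really the only content beyond this bookkeeping — is justifying that nothing \emph{else} ever writes to $\proc{digest}(q)$, which rests on treating $\proc{digest}$ as injective (so syntactically distinct queries with the same subspace still get distinct keys, while queries sharing a digest are the same \proc{select} semantically) and on digest keys being disjoint from the variant-vector keys used for revisions; both are standing assumptions already made informally in the model section. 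I would also note that concurrency is harmless here, since each thread holds its own local copy $\id{cached}$, so the returned value is pinned down at the moment it is read regardless of interleaved writes — and the Fact as stated asks only for \emph{some} $t'$, not for $t' \ge t-\epsilon$, which is where a genuine concurrency argument would be required.
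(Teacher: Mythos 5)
Your argument is correct, and it coincides with the (unstated) reasoning behind this statement: the paper lists it as a \emph{Fact} without proof, precisely because the only way a value ever reaches the key $\proc{digest}(q)$ is through the two \proc{set} lines of \proc{select}, whose \id{result} field was filled by an actual call to \id{database}.\proc{select}$(q)$ --- which is exactly the invariant you formalize and propagate by induction. Your added caveats (injectivity of \proc{digest}, disjointness of digest keys from revision keys, and the remark that the freshness bound $t' \ge t-\epsilon$ is where the real concurrency work lives, namely in the Theorem) match the assumptions the paper makes informally, so nothing is missing.
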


\begin{definition}
Let \func{realRevision}$(p,t)$ be the maximum over all values successfully assigned trough \proc{add}, \proc{set} or \proc{increment} to the key $p$ before moment $t$.
The moment of an assignment is determined by cache server and may be a little later than the moment of calling the method and a little sooner than returning from it.
If key $p$ was never successfully assigned a value, we assume $\func{realRevision}(p,t)=0$.
\end{definition}

\begin{definition}
Let \func{realVersion}$(q,t)$ be a value computed in a same manner as the variable \id{version} in \proc{select}$(q)$, but for each key $p$ using \id{realRevision}$(p,t)$ in place of a value returned by \id{globalCache}.\proc{multiget}. That is   
$$\func{realVersion}(q,t) \defeq \proc{JOIN}( \{\id{realRevision}(p,t) | (p,q)\in E'' \} , ``.")$$
\end{definition}

\begin{fact}
\label{RevisionIsMonotone}
If a moment $t'$ occurs after a moment $t$, then 
$$\forall p \in P .\quad \id{realRevision}(p,t') \ge \id{realRevision}(p,t)\enspace.$$
\end{fact}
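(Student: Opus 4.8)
The plan is to read $\func{realRevision}(p,\cdot)$ as a maximum taken over a set of numbers that can only grow as time passes, so that monotonicity is essentially built in, and then to isolate the one degenerate case where that set is empty at $t$ but not at $t'$. First I would fix $p\in P$ and, for a moment $\tau$, let $A_\tau$ be the (finite) set of all values that were successfully assigned to the key $p$ by some \proc{add}, \proc{set}, or \proc{increment} call whose moment — the moment the cache server attributes to it, exactly as in the definition of \func{realRevision} — is strictly before $\tau$. Since $t'$ occurs after $t$, every assignment contributing to $A_t$ also contributes to $A_{t'}$, hence $A_t\subseteq A_{t'}$.

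Next I would do a three-way case split. If $A_t\neq\emptyset$, then $A_{t'}\neq\emptyset$ too and $\func{realRevision}(p,t')=\max A_{t'}\ge\max A_t=\func{realRevision}(p,t)$, because the maximum of a finite set is at least the maximum of any of its subsets. If $A_t=A_{t'}=\emptyset$, both sides equal $0$ by definition and there is nothing to prove. The only remaining possibility is $A_t=\emptyset$ but $A_{t'}\neq\emptyset$, where $\func{realRevision}(p,t)=0$ and it remains to check that $\max A_{t'}\ge 0$.

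To settle that last case I would show that every value the algorithm ever writes to a revision key is non-negative. The algorithm touches such a key in only two places: the \proc{add} inside \proc{getRevisions}, whose second argument is $\id{monotoneValue}=\proc{now}()\cdot\const{maxQueriesPerTimeStep}\ge 0$; and the \proc{increment} inside \proc{invalidate}, which by the cache specification replaces a stored value $v$ by $v+1$ and is a no-op when the key is absent. No \proc{set} is ever issued against a revision key — the \proc{set} calls in \proc{select} target the digest \id{key} in the result caches, not a revision in \id{globalCache}. Consequently, arguing by induction along the per-key sequence of successful assignments to $p$, whose first element must be an \proc{add} with a non-negative argument since an \proc{increment} on an absent key does nothing, every element of $A_{t'}$ is non-negative; hence $\max A_{t'}\ge 0=\func{realRevision}(p,t)$, which finishes the proof.

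The only real obstacle is the bookkeeping around the word ``moment'': one has to insist that each operation is located in time by the moment the cache server assigns to it, uniformly for $t$ and for $t'$, so that the inclusion $A_t\subseteq A_{t'}$ holds literally rather than only up to the $\epsilon$-slack between calling a method and returning from it. Once that convention is pinned down, the statement reduces to the triviality that a set-maximum is monotone under inclusion, together with the one-line structural observation that revision keys only ever receive non-negative values.
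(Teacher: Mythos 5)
Your proposal is correct and is in essence the argument the paper intends: the paper states this as a Fact without proof precisely because, with $\func{realRevision}(p,\cdot)$ defined as a maximum over the set of successful assignments made before the given moment, monotonicity is immediate once one notes that this set only grows with time. Your extra care with the empty-at-$t$ case (all values ever written to a revision key are non-negative, since the algorithm only ever uses \proc{add} with a non-negative $\id{monotoneValue}$ or a successful \proc{increment}) is a sound and welcome completion of that one degenerate case.
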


\begin{fact}
\label{VersionIsMonotone}
If a moment $t'$ occurs after a moment $t$, then 
$$\forall p \in P .\quad \id{realVersion}(q,t') \succeq \id{realVersion}(q,t)\enspace.$$
\end{fact}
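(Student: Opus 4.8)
The plan is to reduce Fact \ref{VersionIsMonotone} directly to Fact \ref{RevisionIsMonotone} together with the definition of the partial order $\succeq$. First I would observe that the index set appearing in the join that defines $\func{realVersion}(q,\cdot)$, namely $\{p \in P \st (p,q) \in E''\}$, depends only on $q$ and not on the moment in time. Hence $\func{realVersion}(q,t)$ and $\func{realVersion}(q,t')$ are concatenations of equally many revision values, indexed by the very same set of middle-layer nodes. To make the coordinate-wise comparison in the definition of $\succeq$ well posed, I would fix once and for all a canonical ordering of $\{p \st (p,q)\in E''\}$ (say, lexicographic on the vectors $p$), and insist that $\proc{join}(\cdot,``.")$ list its arguments in that order at both moments, so that coordinate $i$ of the two versions corresponds to the same $p$.

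Next I would unfold the definition of $\succeq$: by the displayed equivalence characterising $\proc{join}(a,``.") \succeq \proc{join}(b,``.")$, the claim $\func{realVersion}(q,t') \succeq \func{realVersion}(q,t)$ is equivalent to the assertion that for every coordinate $i$ — equivalently, for every $p$ with $(p,q)\in E''$ — we have $\id{realRevision}(p,t') \ge \id{realRevision}(p,t)$. Each of these inequalities is precisely an instance of Fact \ref{RevisionIsMonotone}, applied to that particular $p$ and to the moments $t \le t'$. Gathering these finitely many coordinate-wise inequalities yields the desired $\succeq$ relation, which finishes the proof.

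I do not expect a real obstacle here: the content is entirely carried by Fact \ref{RevisionIsMonotone}, i.e. by the fact that nothing in the algorithm ever decreases a revision (revisions are only ever installed via \proc{add}, \proc{set} or \proc{increment}, and the definition of $\func{realRevision}$ takes a running maximum). The only point that needs a moment's care is the bookkeeping just described — ensuring the two joins list their coordinates in a consistent order so that the component-wise comparison is legitimate — and with the canonical-ordering convention in place this is immediate.
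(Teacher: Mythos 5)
Your argument is correct and is exactly the reasoning the paper intends: the statement is given there as a \emph{fact} without an explicit proof precisely because it follows coordinate-wise from Fact \ref{RevisionIsMonotone} and the component-wise definition of $\succeq$ over the time-independent index set $\{p \st (p,q)\in E''\}$. Your remark about fixing a canonical ordering of the joined components is harmless bookkeeping that only makes the implicit convention explicit.
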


\begin{lemma}
\label{WriteStrictlyRises}
For any $p \in P$ , if at a moment $t$ we successfully call \proc{add} or \proc{increment} for this particular $p$, 
then the newly assigned value is strictly larger than any before. That is, for any moment $t'$ before the call, and $t''$ after returning from it we have
$\func{realRevision}(p,t') < \func{realRevision}(p,t'')$.
\end{lemma}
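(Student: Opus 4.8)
The plan is to prove both halves of the statement — the case of a successful \proc{add} and the case of a successful \proc{increment} — by first isolating a single invariant: for every $p\in P$, whenever $p$ is present in \id{globalCache} at a moment $s$ its stored value equals $\func{realRevision}(p,s)$; equivalently, the value stored at $p$ never regresses below a value previously assigned to it. Once this invariant is available the lemma is a short computation, so the real content is establishing the invariant.

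First I would record the structural observation that for $p\in P$ the only cache primitives the algorithm ever invokes on $p$ are the \proc{add} inside \proc{getRevisions} and the \proc{increment} inside \proc{invalidate}: the variants returned by \proc{allVariantsOf} (both in \proc{select} and in \proc{invalidate}) are elements of $P$, whereas \proc{set} and \proc{delete} are called only on keys of the form $\proc{digest}(\id{query})$, which never lie in $P$. Hence, apart from spontaneous evictions permitted by the model, every value ever written to $p$ is either a \id{monotoneValue} (equal to $\proc{now}()\cdot\const{maxQueriesPerTimeStep}$) supplied by some thread, or is $1$ plus the value currently stored at $p$. I would then prove the invariant by induction along the totally ordered sequence of cache events affecting $p$ — successful \proc{add}s, successful \proc{increment}s and evictions (reads are irrelevant). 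For an \proc{add}$(p,m)$ succeeding at a moment $t$: $p$ was absent just before $t$, so no value was written to $p$ since its last eviction — an \proc{increment} would require $p$ present, and any other successful \proc{add} would have made $p$ present and blocked this one, save for a further intervening eviction from which the argument simply restarts — hence $\func{realRevision}(p,t)$ is exactly the value held immediately before that eviction, which the standing assumption on \const{maxQueriesPerTimeStep} together with a long enough horizon forces to be strictly below $m$; so after the \proc{add} the stored value $m$ equals $\func{realRevision}(p,\cdot)$ again and $\func{realRevision}$ has strictly increased. For an \proc{increment}$(p)$ succeeding at a moment $t$: $p$ is present, so by atomicity of \proc{increment} and the induction hypothesis the value it reads is $\func{realRevision}(p,t)$, and it writes that value plus one, which again equals the new $\func{realRevision}(p,\cdot)$ and is a strict increase. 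An eviction leaves $\func{realRevision}$ unchanged and keeps the ``absent'' branch of the invariant vacuously true. Monotonicity of $\func{realRevision}$ along this chain (Fact~\ref{RevisionIsMonotone}) is used freely.

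To finish, let the successful call of the lemma perform its assignment at a moment $t$ with $t'<t<t''$. If it is an \proc{add} that wrote $m$, the chain built above gives $\func{realRevision}(p,t'')\ge m>\func{realRevision}(p,t)\ge\func{realRevision}(p,t')$. If it is an \proc{increment}, it wrote $c+1$ where $c=\func{realRevision}(p,t)$ by the invariant, so $\func{realRevision}(p,t'')\ge c+1>c=\func{realRevision}(p,t)\ge\func{realRevision}(p,t')$. Either way $\func{realRevision}(p,t')<\func{realRevision}(p,t'')$, as required.

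I expect the main obstacle to be the \proc{add} case, namely making the standing hypothesis about wall-clock time, \const{maxQueriesPerTimeStep} and the cache horizon fully precise and checking that it genuinely yields $m>\func{realRevision}(p,t)$ even across chains of eviction-and-re-add performed by different threads with slightly skewed clocks; the \proc{increment} case is then just bookkeeping on top of the ``stored value never regresses'' invariant.
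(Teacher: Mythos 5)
Your proposal is correct and follows essentially the same route as the paper: an induction over the chronologically ordered successful write operations on a fixed key $p$, splitting into the \proc{add} case (handled via the horizon and \const{maxQueriesPerTimeStep} assumption, with the same acknowledged informality about clock skew) and the \proc{increment} case (handled via atomicity and the fact that the stored value is the largest so far). The only cosmetic difference is that you fold in explicitly the invariant that the cached value, when present, equals $\func{realRevision}$, which the paper records separately afterwards as Lemma~\ref{HitIsReal} derived from this lemma.
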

\begin{proof}
This can be shown by induction over successful write operations in chronological order for a fixed key $p$.
The algorithm performs only \proc{add} and \proc{increment} write operations on this key.
An \proc{increment} on a missing key always fails, so the first successful operation (if any) had to be \proc{add}, and since the assigned value is strictly larger than zero, the basis of the induction holds.
Now, assume that the last successful write operation before a moment $t$ assigned the highest value so far $r$, and that at the moment $t$ we perform another successful write operation, which can be either
\begin{itemize}
\item \proc{add}, which succeeds only if key $p$ was missing. As we assume the horizon to be large enough, and \const{maxQueriesPerTimeStamp} to be chosen correctly, we can easily show that the new value is larger than $r$ even if clocks of machines are a little bit desynchronized,
\item or \proc{increment}, which succeeds only if key $p$ is still not missing, and therefore the value gets changed from $r$ to $r+1$.
\end{itemize}
In both cases the new value is larger than previous.
\end{proof}

\begin{lemma}
\label{HitIsReal}
At any moment $t$ the
$\attribii{globalCache}{mapping}[p]$ is equal to either \const{undefined} or $\id{realRevision}(p,t)$, that is a 
\proc{get} or \proc{multiget} either returns the real revision, or signals a cache miss.
\end{lemma}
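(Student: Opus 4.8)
The plan is to prove, by induction on the globally ordered sequence of atomic cache operations, the stronger invariant that at every moment $t$ the value $\attribii{globalCache}{mapping}[p]$ is either \const{undefined} or exactly the value written by the most recent \emph{successful} write to $p$ occurring before $t$, and then to combine this with Lemma \ref{WriteStrictlyRises}. Since \proc{add} and \proc{increment} are atomic and \proc{set} and \proc{delete}/eviction are single assignments, all cache operations executed by all threads can be linearized into one chronological sequence (the order being the one the cache server uses to timestamp them, which is also the order relevant to \func{realRevision}), so it suffices to check that each step preserves the invariant.

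First I would record two structural observations about the algorithm. Nowhere does it call \proc{delete} on \id{globalCache}, so the only way a previously assigned key $p$ can revert to \const{undefined} is a spontaneous eviction. Moreover, the only writes the algorithm ever performs on a key $p \in P$ are the \proc{add} in \proc{getRevisions} and the \proc{increment} in \proc{invalidate}; in particular \proc{set} is never applied to a revision key (the \proc{set} calls inside \proc{select} target $\proc{digest}(\id{query})$, not a vector in $P$). Hence for such $p$ the write operations are exactly \{\proc{add},\proc{increment}\}, which is the setting of Lemma \ref{WriteStrictlyRises}.

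The induction is then a routine case analysis. Base case: before any operation touches $p$ the key is absent, so the invariant holds and $\func{realRevision}(p,t)=0$. Inductive step: a step not touching $p$ preserves the invariant trivially; otherwise it is one of (i) an eviction, after which $\attribii{globalCache}{mapping}[p]=\const{undefined}$; (ii) a failed \proc{add} (key present) or failed \proc{increment} (key absent), which leaves \id{mapping}[p] unchanged; (iii) a successful \proc{add}, which by definition happens only when $p$ was absent and sets \id{mapping}[p] to the \id{monotoneValue} just written, so \id{mapping}[p] equals the most recent successful write; or (iv) a successful \proc{increment}, which happens only when $p$ was present, so by the induction hypothesis \id{mapping}[p] held the value $r$ of the previous successful write and is updated to $r+1$, again the most recent write. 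This establishes the invariant. Finally, by Lemma \ref{WriteStrictlyRises} the successively written values strictly increase, so the last one is their maximum; hence ``value of the most recent successful write before $t$'' coincides with $\func{realRevision}(p,t)$, giving that whenever $\attribii{globalCache}{mapping}[p]\neq\const{undefined}$ it equals $\func{realRevision}(p,t)$. Since \proc{get} returns \id{mapping}[p] when $p$ is present and \const{undefined} otherwise, and \proc{multiget} is a pointwise \proc{get}, the stated corollary about \proc{get}/\proc{multiget} follows immediately.

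The one genuinely delicate point is the concurrency bookkeeping: one must be sure there is no interleaving in which a thread reads a value and then overwrites a fresher one written in between. This is exactly what the atomicity of \proc{increment} (read-and-write in one step) and of \proc{add} (test-and-set in one step) rules out, so once the linearization is set up the remaining work is mechanical; the only other thing to be careful about is to invoke Lemma \ref{WriteStrictlyRises} only for $p\in P$, which is justified by the structural observation that no \proc{set} or \proc{delete} ever reaches a revision key.
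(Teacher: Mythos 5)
Your argument is correct and is essentially the paper's own proof in expanded form: the paper's two-line proof also reduces the claim to Lemma \ref{WriteStrictlyRises} (the current value is the last successful write, and since writes strictly increase, the last one is the maximum, i.e.\ $\func{realRevision}(p,t)$), leaving implicit the linearization and the observation that no \proc{set} or \proc{delete} ever touches a key in $P$, which you spell out explicitly. No gap; your version just makes the bookkeeping the paper takes for granted explicit.
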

\begin{proof}
The only difficulty in this Lemma is that we defined \id{realRevision} to be the maximum over all successful write operations, while the Lemma
states something about the current value.
From Lemma \ref{WriteStrictlyRises} we know that each new value is actually the largest so far.
\end{proof}

\begin{lemma}
\label{VersionIsFresh}
If \proc{select}$(q)$ is called at a moment $t$ then the value assigned to the variable \id{version} is $\succeq \func{realVersion}(q,t)$.
\end{lemma}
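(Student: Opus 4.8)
First I would unwind the definition of \id{version}. The list \id{variants} computed on the first line of \proc{select}$(q)$ is exactly the set $\{p \in P \mid (p,q) \in E''\}$ of middle‑layer nodes incident to $q$: each non‑star coordinate of $q$ is either kept or rewritten to $?$, and each star coordinate of $q$ is kept as $*$, which is precisely the condition defining $E''$ in Definition~\ref{EEDefinition}. The variable \id{version} is \proc{join} applied to the list $[\id{revisions}[p] : p \in \id{variants}]$ returned by \proc{getRevisions}, and $\func{realVersion}(q,t)$ is, by its definition, computed in the very same way but with $\func{realRevision}(p,t)$ in place of $\id{revisions}[p]$, over the same ordered list of variants. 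Hence, by the defining property of \proc{join} and the partial order $\succeq$ (one concatenated version dominates another iff it dominates it coordinatewise), the lemma reduces to showing, for every variant $p$, that $\id{revisions}[p] \ge \func{realRevision}(p,t)$.

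I would establish this through a slightly more general statement about \proc{getRevisions}, proved by induction on the recursion depth: \emph{whenever \proc{getRevisions}$(\id{subspaces})$ is invoked at a moment $s$, the returned mapping satisfies $\id{revisions}[p] \ge \func{realRevision}(p,s)$ for every $p \in \id{subspaces}$.} The induction is finite because, by the lemma that \proc{getRevisions} performs no more than one recursive call, the recursion has depth at most one. For the inductive step, fix a variant $p$ and trace where its final value came from; there are exactly three mutually exclusive and exhaustive cases. (i) \proc{multiget} returned a defined value for $p$ at some moment $s_1 \ge s$; by Lemma~\ref{HitIsReal} that value equals $\func{realRevision}(p,s_1)$, and by Fact~\ref{RevisionIsMonotone} it is $\ge \func{realRevision}(p,s)$. (ii) \proc{multiget} missed and the subsequent \proc{add}$(p,\id{monotoneValue})$ succeeded, so $\id{revisions}[p]=\id{monotoneValue}$; this \proc{add} is executed after the invocation of \proc{getRevisions} (which begins by calling \proc{now} and \proc{multiget}), hence after moment $s$, so by Lemma~\ref{WriteStrictlyRises} the value \id{monotoneValue} strictly exceeds every value ever assigned to $p$ before moment $s$, i.e. $\id{monotoneValue} > \func{realRevision}(p,s)$. (iii) $p$ landed in \id{unknown} and its value was supplied by the single recursive call, invoked at some moment $s' \ge s$; by the induction hypothesis and Fact~\ref{RevisionIsMonotone}, $\id{revisions}[p] \ge \func{realRevision}(p,s') \ge \func{realRevision}(p,s)$. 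In every case $\id{revisions}[p] \ge \func{realRevision}(p,s)$, which closes the induction.

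Applying this claim to the call \proc{getRevisions}$(\id{variants})$ that \proc{select}$(q)$ issues at some moment $t_0 \ge t$, and invoking Fact~\ref{RevisionIsMonotone} once more to pass from $t_0$ back to $t$, gives $\id{revisions}[p] \ge \func{realRevision}(p,t_0) \ge \func{realRevision}(p,t)$ for every variant $p$; by the reduction of the first paragraph this is exactly $\id{version} \succeq \func{realVersion}(q,t)$, as required.

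I expect the only real content to be concentrated in case (ii): that the \id{monotoneValue} a thread commits via a successful \proc{add} genuinely dominates the real revision at the start of the \proc{select}. This is precisely what Lemma~\ref{WriteStrictlyRises} delivers, and it is where the modelling assumptions — a cache horizon long enough that nothing this thread writes gets evicted while it runs, and a correctly chosen \const{maxQueriesPerTimeStep} — are actually used; everything else is just bookkeeping of which moment precedes which, a three‑way case split, and the already‑established bound of at most one recursive call that keeps the induction terminating.
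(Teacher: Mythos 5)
Your proposal is correct and follows essentially the same route as the paper: reduce to a per-variant bound via the coordinatewise definition of $\succeq$ and \proc{join}, then split on whether the value came from a \proc{multiget} hit (Lemma~\ref{HitIsReal} plus Fact~\ref{RevisionIsMonotone}) or a successful \proc{add} after moment $t$ (Lemma~\ref{WriteStrictlyRises} plus monotonicity). Your explicit induction over the (at most one) recursive call of \proc{getRevisions} is merely a more careful bookkeeping of a case the paper's proof treats implicitly, not a different argument.
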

\begin{proof}
For each subspace $p$ in the variable \id{variants} the value returned by \proc{getRevisions} and assigned to the variable \id{revisions} come either from \id{globalCache}.\proc{multiget} or was used in a successful \id{globalCache}.\proc{add} for that key $p$.
If it was from \proc{multiget} then from Lemma \ref{HitIsReal} and Fact \ref{RevisionIsMonotone} it follows that this value was not smaller than \func{realRevision}$(p,t)$.
If it was used in a successful \id{globalCache}.\proc{add} then it hat to be at the moment $t' > t$, and it has to be equal to \func{realRevision}$(p,t')$ which according to Fact \ref{RevisionIsMonotone} is not smaller than \func{realRevision}$(p,t)$.
The $\succeq$ relation was defined so that \proc{join}ing greater or equal values together yields a greater or equal result.
\end{proof}

\begin{lemma}
\label{VersionIsNotFromFuture}
If \proc{select}$(q)$ is called at a moment $t$ then value assigned to the variable \id{version} is $\preceq \func{realVersion}(q,t')$ where $t'>t$ is
the moment at which the value is actually being assigned to this variable.
\end{lemma}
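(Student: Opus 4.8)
The plan is to track, for each coordinate of the vector \id{version} computed inside \proc{select}$(q)$, the exact moment at which that numeric value physically resided in the \id{globalCache}, and to observe that this moment necessarily precedes $t'$, so that monotonicity of \func{realRevision} (Fact \ref{RevisionIsMonotone}) finishes the job coordinate by coordinate.

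First I would check that the list \id{variants} returned by $\proc{allVariantsOf}(\id{q},\const{k},\{nonStar \to\enspace ?\})$ is exactly the set $\{p : (p,q)\in E''\}$ — the vectors obtained from $q$ by turning some non-star coordinates into $?$ — listed in the very order the definition of \func{realVersion}$(q,\cdot)$ uses; this is a straightforward unfolding of \proc{allVariantsOf} and of $E''$ from Definition \ref{EEDefinition}. Granting this, it suffices to prove, for each such $p$, that the entry $\id{revisions}[p]$ returned by \proc{getRevisions} is at most $\func{realRevision}(p,t')$; the conclusion $\id{version}\preceq\func{realVersion}(q,t')$ then follows directly from how $\succeq$ was defined on \proc{join}ed lists.

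Next I would do a case analysis on the provenance of $\id{revisions}[p]$, of which \proc{getRevisions} admits only two. Either the value came from a \id{globalCache}.\proc{multiget} — the top-level one, or the one inside the single recursive call permitted by the lemma above bounding the recursion — in which case Lemma \ref{HitIsReal} says the returned entry equals $\func{realRevision}(p,t_p)$ for the moment $t_p$ of that \proc{multiget}; or the value is the current thread's \id{monotoneValue} used in a \emph{successful} \id{globalCache}.\proc{add} at some moment $t_p$, in which case that value is written into the cache and, combining Lemma \ref{WriteStrictlyRises} (every successful write exceeds all earlier ones, so the running maximum defining \func{realRevision} equals the last value written) with Lemma \ref{HitIsReal}, we again get $\id{revisions}[p]=\func{realRevision}(p,t_p)$. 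In both situations $t_p<t'$, because every cache operation performed by \proc{getRevisions} completes before \proc{join} assigns \id{version}, and then Fact \ref{RevisionIsMonotone} yields $\id{revisions}[p]=\func{realRevision}(p,t_p)\le\func{realRevision}(p,t')$.

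The step I expect to be the main obstacle is the \proc{add} branch: one must argue carefully that the value just written really is the real revision immediately afterwards — this is not immediate from the ``running maximum'' definition of \func{realRevision} and is exactly why Lemma \ref{WriteStrictlyRises} is needed — and one must also check that when the top-level \proc{add} fails the entry $\id{revisions}[p]$ is instead supplied by the recursive \proc{multiget}, which collapses that sub-case into the first branch. Everything else, in particular verifying that all the relevant timestamps lie before $t'$ and that the coordinate orderings of \id{version} and \func{realVersion}$(q,t')$ agree, is routine bookkeeping.
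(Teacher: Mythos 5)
Your argument is correct and follows essentially the same route as the paper: the paper proves this lemma by repeating the proof of Lemma \ref{VersionIsFresh} (a case analysis on whether each entry of \id{revisions} came from \proc{multiget} or from a successful \proc{add}, then Lemma \ref{HitIsReal}, Fact \ref{RevisionIsMonotone} and the coordinate-wise definition of $\succeq$), only comparing against the moment $t'$ of the assignment instead of $t$. Your extra care in the \proc{add} branch via Lemma \ref{WriteStrictlyRises} just makes explicit what the paper leaves implicit.
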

\begin{proof}
The proof is similar to that for Lemma \ref{VersionIsFresh} except we use Fact \ref{RevisionIsMonotone} to compare everything to the moment $t'$, not $t$.
\end{proof}

\begin{lemma}
\label{InvalidationWorks}
Let $\epsilon$ be an upper bound for the time necessary to execute the \proc{invalidate} method.
Let $q_w,q_r$ be such, that $\func{subspace}(q_w) \cap \func{subspace}(q_r) \neq \emptyset$.
If \proc{invalidate}$(q_w)$ is called at a moment $t$ and
\proc{select}$(q_r)$ is called at the moment $t+\epsilon$, then the variable \id{version} is assigned a value $\succ \func{realVersion}(q_r,t)$. 
\end{lemma}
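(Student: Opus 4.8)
The plan is to pin down one subspace $p^{*}$ that \proc{invalidate}$(q_w)$ touches and that \proc{select}$(q_r)$ also inspects, and then to show that the revision of $p^{*}$ strictly increases between moment $t$ and the moment \proc{select} reads it. First I would make two structural observations about the code: the variants produced by \proc{allVariantsOf}$(q_w,k,\{*\to?,\,nonStar\to*\})$ inside \proc{invalidate} are exactly $\{p : (q_w,p)\in E'\}$, and the variants produced by \proc{allVariantsOf}$(q_r,k,\{nonStar\to?\})$ inside \proc{select} are exactly $\{p : (p,q_r)\in E''\}$; both follow by reading the substitution rules coordinate by coordinate against Definition \ref{EEDefinition}. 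Since $\func{subspace}(q_w)\cap\func{subspace}(q_r)\neq\emptyset$, Lemma \ref{IntersectWhenAgree} gives $(q_w,q_r)\in E$, and Lemma \ref{TransitiveClosure} then supplies some $p^{*}\in P$ with $(q_w,p^{*})\in E'$ and $(p^{*},q_r)\in E''$. Consequently \proc{invalidate}$(q_w)$ calls $\proc{increment}(p^{*})$, \proc{getRevisions} inside \proc{select}$(q_r)$ assigns a value to this same key $p^{*}$, and $\func{realRevision}(p^{*},t)$ is precisely the $p^{*}$-component of $\func{realVersion}(q_r,t)$.

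Next I would reduce the statement to that single component. By Lemma \ref{VersionIsFresh} applied at call time $t+\epsilon$ together with Fact \ref{VersionIsMonotone}, the value assigned to \id{version} satisfies $\id{version}\succeq\func{realVersion}(q_r,t+\epsilon)\succeq\func{realVersion}(q_r,t)$. Therefore it is enough to prove that the $p^{*}$-component of \id{version} is \emph{strictly} larger than $\func{realRevision}(p^{*},t)$, since this promotes $\succeq$ to $\succ$ by the definition of the order.

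For the strictness I would split on whether the call $\proc{increment}(p^{*})$ performed by \proc{invalidate}$(q_w)$ --- which happens at some moment in $[t,t+\epsilon]$ --- succeeds. If it does, Lemma \ref{WriteStrictlyRises}, used with the moment $t$ before the call and the moment $t+\epsilon$ after it (recall \proc{invalidate} finishes within $\epsilon$), yields $\func{realRevision}(p^{*},t)<\func{realRevision}(p^{*},t+\epsilon)$, and Lemma \ref{VersionIsFresh} shows the $p^{*}$-component of \id{version} is at least $\func{realRevision}(p^{*},t+\epsilon)$. If it fails, $p^{*}$ was absent from \id{globalCache} at that moment; let $m_e$ be its last eviction before then (and put $v=0$ if $p^{*}$ was never written). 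Since no write touches $p^{*}$ between $m_e$ and the failed \proc{increment}, we have $\func{realRevision}(p^{*},t)\le v$, where $v$ is the value present just before $m_e$. When \proc{getRevisions} afterwards queries $p^{*}$ (at a time $\ge t+\epsilon\ge m_e$), it either reads a value already re-installed after $m_e$ or re-installs one itself via \proc{add}; in every case the value is some \id{monotoneValue} computed after $m_e$, possibly raised by later \proc{increment}s, which by the long-horizon assumption and the choice of \const{maxQueriesPerTimeStep} exceeds $v$. So the value \proc{getRevisions} gives to $p^{*}$, hence the $p^{*}$-component of \id{version}, is strictly greater than $\func{realRevision}(p^{*},t)$. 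In both branches this, combined with $\id{version}\succeq\func{realVersion}(q_r,t)$, gives $\id{version}\succ\func{realVersion}(q_r,t)$.

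The step I expect to be the main obstacle is the failed-\proc{increment} branch: it is the only place that relies on the soft hypotheses --- a sufficiently long horizon and a correctly sized \const{maxQueriesPerTimeStep} --- to guarantee that a re-initialised \id{monotoneValue} dominates whatever value existed before eviction, exactly as in the \proc{add} case of Lemma \ref{WriteStrictlyRises}. Everything else is bookkeeping against the definitions of $E'$ and $E''$ and the monotonicity facts already established.
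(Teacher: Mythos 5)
Your proof is correct and follows essentially the same route as the paper: obtain the bridging middle-layer node via Lemma \ref{TransitiveClosure}, note that \proc{invalidate} must have incremented it before $t+\epsilon$, split on whether that \proc{increment} succeeded, and use Lemma \ref{WriteStrictlyRises} together with the freshness/monotonicity lemmas to promote $\succeq$ to $\succ$. Your only additions are making explicit that \proc{allVariantsOf} realizes the $E'$ and $E''$ neighborhoods and unfolding the eviction bookkeeping in the failed-\proc{increment} branch, which the paper handles more tersely by citing the successful \proc{add} that must occur before \id{version} is assigned.
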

\begin{proof}
From Definition \ref{EDefinition} and Lemma \ref{TransitiveClosure} we know that there exists $p\in P$ such that $(q_w,p) \in E' \land (p,q_r) \in E''$.
This particular $p$ is important, as it connects \proc{select}$(q_r)$ and \proc{invalidate}$(q_w)$.
At moment $t+\epsilon$ the \proc{invalidate} is already finished, so the algorithm for \proc{invalidate} had to call \call{globalCache}{increment}(p) before $t+\epsilon$.
There are two cases.
\begin{itemize}
\item \call{globalCache}{increment}(p) was successful and assigned a new value strictly larger than \func{realRevision}$(p,t)$, implying that $\func{realRevision}(p,t) < \func{realRevision}(p,t+\epsilon)$ and thus $\func{realVersion}(q_r,t) \prec \func{realVersion}(q_r,t+\epsilon) \preceq \id{version}$.
\item or it failed due to the key $p$ being missing. 
At some moment between $t$ and the assignment to the variable \id{version} a successful \call{globalCache}{add} for the key $p$ had to occur. 
From Lemma \ref{WriteStrictlyRises} we know that the value assigned had to be strictly larger than $\func{realRevision}(p,t)$, and thus again we get $\func{realVersion}(q_r,t) \prec \id{version}$
\end{itemize}
\end{proof}

\begin{theorem}
The result returned by the proposed \proc{select}$(q)$ algorithm when called at a moment $t$ is equal to the result that would be returned by \id{database}.\proc{select}$(q)$ if called at some moment $t' \ge t - \epsilon$ where $\epsilon$ is the upper bound for the time between calling \id{database}.\proc{delete} or \id{database}.\proc{insert} and exiting from \proc{invalidate} method.
\end{theorem}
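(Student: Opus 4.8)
The plan is to pin down a single moment $s$ at which the value $R$ returned by \proc{select}$(q)$ equals $\id{database}.\proc{select}(q)$, and then to show that nothing relevant happened to the table between $s$ and $t-\epsilon$, so that $R$ also equals $\id{database}.\proc{select}(q)$ evaluated at some moment $t'\ge t-\epsilon$. First I would read off two cases from the code of \proc{select}: the returned $R=\attribii{cached}{result}$ is either (i) produced by $\id{database}.\proc{select}(q)$ inside the present call, in which case that call happens at a moment $s\ge t$ and $t'=s$ works immediately; or (ii) read from \id{localCache} or \id{globalCache} together with a tag $v_c=\attribii{cached}{version}$ that must satisfy $v_c\succeq\id{version}$, since otherwise the recompute branch would have fired.

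For case (ii) I would identify the thread $\tau$ that produced the pair $(\attribii{cached}{result},\attribii{cached}{version})$. The only writes under the key $\proc{digest}(q)$ are the \proc{set}s performed inside \proc{select}$(q)$, and \proc{digest} is collision-free, so $\tau$ ran \proc{select}$(q)$, computed $\id{version}=v_c$ at some moment, and then evaluated $\id{database}.\proc{select}(q)=R$ at a strictly later moment $s$. Applying Lemma~\ref{VersionIsNotFromFuture} to $\tau$ and then Fact~\ref{VersionIsMonotone} (the tag was fixed before $s$) gives $v_c\preceq\func{realVersion}(q,s)$, hence $\id{version}\preceq v_c\preceq\func{realVersion}(q,s)$. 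If $s\ge t-\epsilon$ we are done with $t'=s$; otherwise $s<t-\epsilon$, and I would establish the key claim: no write affecting \proc{select}$(q)$ has its $\id{database}.\proc{delete}$ or $\id{database}.\proc{insert}$ call inside $[s,t-\epsilon]$.

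To prove the claim, suppose a write with write query $q_w$ occurs at $u\in[s,t-\epsilon]$, where $\func{subspace}(q_w)\cap\func{subspace}(q)\neq\emptyset$ --- for an \proc{insert}$(r)$ I take $q_w=r$ and use Lemma~\ref{IntersectWhenAgree} to note $r\in\func{subspace}(q)\iff\func{subspace}(r)\cap\func{subspace}(q)\neq\emptyset$, so both kinds of write fall under this condition. By the definition of $\epsilon$ the call \proc{invalidate}$(q_w)$ finishes by $u+\epsilon\le t$, so its call \call{globalCache}{increment}(p) on the bridging node $p\in P$ supplied by Lemma~\ref{TransitiveClosure} happens before $t$; since the present \proc{select}$(q)$ reads \id{globalCache} only at a moment $\ge t$, the argument of Lemma~\ref{InvalidationWorks} applies verbatim and yields $\id{version}\succ\func{realVersion}(q,u)$. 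Together with $\id{version}\preceq\func{realVersion}(q,s)$ this gives $\func{realVersion}(q,u)\prec\func{realVersion}(q,s)$, contradicting Fact~\ref{VersionIsMonotone} because $u\ge s$. With the claim in hand, the three elementary Facts stating that $\id{database}.\proc{select}(q)$ is unchanged by an arbitrary \proc{select}, by an \proc{insert} of a record outside $\func{subspace}(q)$, and by a \proc{delete} of a query disjoint from $\func{subspace}(q)$ --- together with the fact that the database only ever receives \proc{select}, \proc{insert}, \proc{delete} --- force $\id{database}.\proc{select}(q)$ to stay equal to $R$ throughout $[s,t-\epsilon]$, so $t'=t-\epsilon$ finishes case (ii).

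The step I expect to be the main obstacle is the timing bookkeeping in case (ii): one must compare the cached tag $v_c$ to $\func{realVersion}(q,\cdot)$ at the cache writer's database moment $s$ (an upper bound, via Lemma~\ref{VersionIsNotFromFuture}) and to $\func{realVersion}(q,\cdot)$ just before any intervening write (a strict lower bound, via Lemma~\ref{InvalidationWorks}), and these two estimates must be anchored at moments comparable enough that Fact~\ref{VersionIsMonotone} actually closes the loop; this is also where the definition of $\epsilon$ as the \emph{entire} write-plus-invalidate budget, rather than merely the \proc{invalidate} budget of Lemma~\ref{InvalidationWorks}, gets used. The supporting points --- that the bridging $p\in P$ of Lemma~\ref{TransitiveClosure} is exactly what makes Lemma~\ref{InvalidationWorks} applicable, and that \proc{insert}s reduce to the same intersection condition via Lemma~\ref{IntersectWhenAgree} --- are routine but necessary to treat deletes and inserts uniformly.
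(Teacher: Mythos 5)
Your proposal is correct and follows essentially the same route as the paper: dispose of the recompute branch trivially, trace the cached pair back to the originating thread's \id{database}.\proc{select} at a moment $s$ (the paper's $t''$), bound $\attribii{cached}{version}$ via Lemma~\ref{VersionIsNotFromFuture}, and rule out any intersecting write in $(s,t-\epsilon)$ by a contradiction with Lemma~\ref{InvalidationWorks}, so that $t'=t-\epsilon$ works otherwise. The only cosmetic difference is that you close the contradiction by comparing \id{version} directly with $\func{realVersion}(q,s)$ and $\func{realVersion}(q,u)$ and invoking Fact~\ref{VersionIsMonotone}, whereas the paper routes the same inequality through Lemma~\ref{VersionIsFresh} and $\func{realVersion}(q,t)$.
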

\begin{proof}
The theorem holds trivially if the algorithm was forced to actually execute \call{database}{select}, 
so let us concentrate on the more interesting cases when the result was served from the \id{localCache} or the \id{globalCache}.

The algorithm verified that $\attribii{cached}{version} \succeq \id{version}$ before returning the \attribii{cached}{result}.
From Lemma \ref{VersionIsFresh} we know, that $\id{version} \succeq \func{realVersion}(q,t)$.
Consider the moment $t''$ when the thread which put \id{cached} in the \id{globalCache} performed \call{database}{select}$(q)$.
By Lemma \ref{VersionIsNotFromFuture} we know that $\attribii{cached}{version} \preceq \func{realVersion}(q,t'')$.
Using transitivity of $\succ$ relation we get $\func{realVersion}(q,t) \preceq \func{realVersion}(q,t'')$.
If $t'' \ge t - \epsilon$, then let $t' = t''$ and we are done.
Otherwise there are two cases to consider:
\begin{itemize}
\item there was a call to \call{database}{delete} or \call{database}{insert} during the period $(t'',t-\epsilon)$ which resulted in a call to \proc{invalidate}$(q')$ such that $(q',q)\in E$. From Lemma \ref{InvalidationWorks} we immediately get $\func{realVersion}(q,t) \succ \func{realVersion}(q,t'')$ which is a contradiction,
\item otherwise in the period $(t'',t-\epsilon)$ there was no write to the database that could affect \call{database}{select}$(q)$. A hypothetical result of calling it at moment $t'=t-\epsilon$ would be equal to the \attribii{cached}{result}.
\end{itemize}
\end{proof}
\section{Optimizations}
\label{Optimizations}

Instead of using a \id{localCache} to store cached results, we can use the \id{globalCache} only and fetch all required information in a single \proc{multiget}.
This may be a good choice if communication latency is a more important issue than bandwidth, as we trade a single local \proc{get} for a larger global \proc{multiget} here. 

An important optimization is to limit the number of dimensions to only those which are ever used in queries with non-* values.
For example if a table \id{User} contains 20 columns, but equality constraints in \kw{WHERE} clauses of statements involve only \id{Id}, \id{Email} or \id{Password}, then
there is no point in using full 20-dimensional model of the query space. It is enough to project the space onto the 3 important dimensions only, which greatly reduces the amount of revision keys involved.

A more sophisticated optimization is to trim the dependency graph even further by observing what type of queries the system performs.
It is rarely the case that all possible subspaces are really generated by read and write queries.
Those nodes that do not correspond to any known query can be removed, together with all edges and all nodes in the middle layer which became isolated.
Intuitively a revision counter is only really needed if it is both incremented and read, otherwise we can remove it from the graph.
For the example mentioned in Section \ref{Introduction}, write queries always have subspace of the form $(U,G,D)$ or $(U,*,*)$ while subspace of a read query always has a form $(*,G,D)$.
Using this domain specific knowledge one can tune the algorithm so that \SELECT $(*,G,D)$ will depend only on $(*,G,D)$ and $(*,?,?)$, while every \INSERT $(U,G,D)$ will invalidate only $(*,G,D)$ and every \DELETE $(U,*,*)$ will invalidate only $(*,?,?)$. 
This way \SELECT query will need to fetch only 2 revision keys, while \INSERT and \DELETE will have to increment only a single key.
\begin{figure}[!htb]
\includegraphics[width=\columnwidth]{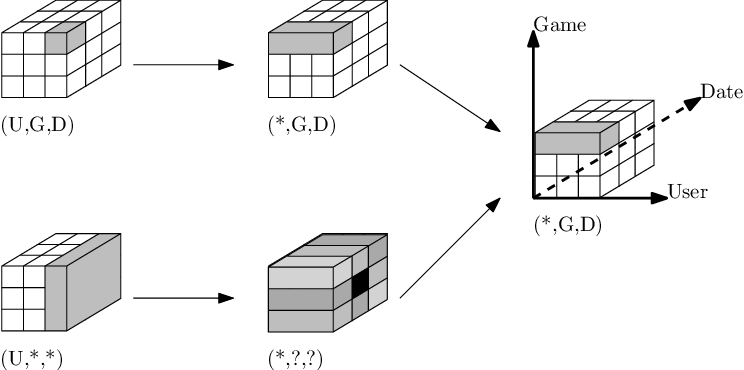}
\label{Trimmed}
\end{figure}

Note that this is exactly the same procedure, as the manual solution mentioned in Section \ref{Introduction}, but with more systematic names for the main revision numbers and revision number for each particular pair $(G,D)$. Our algorithm refers to them as $(*,?,?)$ and $(*,G,D)$ respectively.
An illustration of the fragment of a trimmed graph is presented in Figure \ref{Trimmed}. 
It is just a fragment, as nodes parametrized by $U$,$G$, or $D$ should have multiple copies in the graph, one for each possible value of parameters. We can see that it is trimmed though, as the picture contains only one (not 8) outgoing edges from each query.
In general this can be greatly automated by providing a white list of patterns for write and read queries, which can be then used to deduce a minimal set of middle layer nodes which need to be incremented, as well as to test if system generates only white-listed queries.

Some systems which perform Create, Read, Update, Delete (CRUD) operations, actually create rows one at a time, and also update them one-by-one.
This behavior can be seen in many RESTful applications, where CRUD operations are mapped to HTTP verbs POST, GET, PUT, DELETE.
Even if the system deletes multiple rows at once, one could often emulate it with a loop deleting one row after another, charging the cost of the overhead for each row directly to the \INSERT statement which created the row. 
Therefore it is quite realistic to consider systems in which write queries always have 0-dimensional subspaces, that is subspaces without stars. 
Observe that in such systems we will never have to increment a revisions for nodes containing a question mark, as the only rule which results in incrementing them requires a star in the original query. 
As noted before, this allows us to trim the graph and remove all the nodes from the middle layer which contain a question mark. 
Moreover nodes in the right layer are by definition connected only to nodes in the middle layer which differ only at positions with question marks. Since we now have no question marks at all, it implies that the node in the right layer is now connected only to a single node in the middle layer, the one which has exactly equal label.
In other words, for such CRUD systems we need to fetch only a single revision number from the cache during each \kw{SELECT}. 
The algorithm becomes much simpler and faster.

Some cache implementations allow to bulk increment operations into a single packet. 
This could be used to optimize invalidation into a single round trip to the \id{globalCache}.

Simplistic model presented in Section \ref{Model} required us to handle \kw{OR} inside the \kw{WHERE} clause with extra caution, leading to a safe overestimation of the subspace scanned by a query. For example\\ 
\mbox{\kw{WHERE} (\id{User} = 2 \SQLOR \id{User} = 2) \SQLAND \id{Game}=7}\\ 
becomes upper bounded to \kw{WHERE} \id{Game}=7, and thus is subject to invalidation more often than necessary.
The framework can be easily adapted to handle \kw{OR} more efficiently.
Assume the \kw{WHERE} clause is in a DNF. For each clause compute the subspace separately.
If this is a \proc{delete} query, then call \proc{invalidate} for each subspace separately.
If this is a \proc{select} query, then \id{variants} should be computed as a concatenation of \proc{allVariantsOf} computed for each subspace.

In Section \ref{Model} we explained that inequalities are not handled optimally by the algorithm -- they are simply discarded.
For small domains of integers we could use the following emulation, which uses binary representation.

A column storing $w$-bit integers can be virtually replaced with $w$ columns storing each bit separately.
A range constraint \mbox{\id{x} \kw{BETWEEN} :A \SQLAND :B} can be rewritten so that it uses only $O(w)$ clauses with equality constraints.
For example \id{x} \kw{BETWEEN} 1 \SQLAND 7 becomes
$x_2 = 0$ \SQLAND $x_1 = 0$ \SQLAND $x_0 = 1$ \SQLOR  
$x_2 = 0$ \SQLAND $x_1 = 1$ \SQLOR 
$x_2 = 1$ 
which corresponds to $O(w)$ subspaces : $(0,0,1)$, $(0,1,*)$, and $(1,*,*)$. 
In general degree of each of these subspaces in the revision dependency graph is $O(2^w)$, which may be prohibitive.

Observe, however, that these subspaces have a specific form -- stars in a tuple always form a suffix.
The substitution rules used during read queries will thus never generate a question mark to the right of a star.
Also, the rules used for write queries will never put a question mark to the left of a number.
Therefore, by directly applying the trimming idea presented before, we can restrict the middle layer only to nodes in which the tuple consists of three separate (possibly empty) parts: a prefix of numbers, a suffix of stars and questions marks in the middle.
It can be shown, that this optimization alone reduces the degree of nodes in left and right layer to $O(w)$, and the total number of keys accessed during a query to $O(w^2)$. 

But we can do better.
Notice, that nodes in the middle layer with exactly the same numeric prefix are always incremented together and thus (at least if we ignore evictions) should have always the same value. This leads to another optimization : replacing multiple counters which share the same value with a single one.
Let us replace a counters $(1,?,*)$,$(1,*,*)$ and $(1,?,?)$ with a single $(1,\%,\%)$, etc.
The intuition behind this is that tuples without percent signs correspond to subtrees of a full binary tree spanned over the integers, which together cover the range of the query, while tuples with percent signs correspond to their ancestors. 
As each range can be covered by $O(w)$ subtrees, which have many common ancestors, the total number of accessed keys can easily be shown to be $O(w)$.
In other words a single $w$-bit column increases the number of keys accessed during each query $O(w)$ times and does not increase the number of communication round trips.

For simplicity and minimal technical requirements this paper considers storing revision keys in a volatile cache. This comes at the cost of complicated procedures handling cache misses. In practice it would be wiser to store them in a in-memory database backed up by an append log stored to a permanent memory, such as Redis. This introduces another problem though -- lifespan of a revision key becomes infinite and at some point we can run out of memory if number of combinations of parameters used in queries is not limited.

In some scenarios it may be an important optimization to invalidate cache if and only if write operation actually affected any rows.
  \appendix
\section{Performance tests}
The proposed algorithm has been used in applications having more than half a million users for several months 
without any problems. 
Additionally, for the purpose of this paper the algorithm was tested in the following artificial setting.
There were 10 threads written in PHP, communicating with two different Memcached servers (local and global) and with a MySQL database containing a simple 3-dimensional table.
Each thread performed 10 000 random operations from the list:
\begin{itemize}
\item {\INSERT} a random point of the 10x10x10 integer grid,
\item {\DELETE} a random one-dimensional line, 
\item {\SELECT} a random two-dimensional plane.
\end{itemize}
Test results for various probabilities of these operations are shown in Table \ref{Results}.
Before each test caches where empty and the table was filled with 500 equally spaced points.
\begin{description}
\item[cache hits] is the number of times a result from current version was found in cache and \call{database}{select} was not performed
\item[stale] is the number of times the algorithm returned a stale value found in a cache due to the latency of the invalidation algorithm (the $\epsilon$)
\item[max age] is the age of the most stale result ever returned from cache (the $\epsilon$). Of course this value greatly depends on the performance of the machine, so it is given here just to illustrate how large $\epsilon$ is in practice  
\item[med age] is the median of ages of stale results returned from cache. Fresh results are excluded here
\item[fresh] is the number of times the value returned from cache was actually exactly the same as the database would return
\item[naive hits] is the number of hits achieved by a hypothetical naive algorithm which flushes whole cache after each modification of the database
\item[inserts] is the number of inserts not ignored due to the uniqueness constraint
\item[deletes] is the number of deletes which removed at least one record
\end{description}
\begin{tabular}{l|rrrrr}
\label{Results}
ppb of \proc{select}      &  99\%   &  98\%  & 90\%     & 80\%    & $1/3$   \\
ppb of \proc{insert}      &  0.9\%  &   1\%  &  9\%     & 10\%    & $1/3$   \\
ppb of \proc{delete}      &  0.1\%  &   1\%  &  1\%     & 10\%    & $1/3$   \\
\hline \hline
selects                   & 98956   & 97964  & 89952    & 80298   & 33408   \\
\quad cache misses        & 2876    & 9089   & 24110    & 48529   & 31260   \\
\quad cache hits          & 96080   & 88875  & 65842    & 31769   & 2182    \\
\quad hit ratio           & 97\%    & 91\%   & 73\%     & 35\%    & 7\%     \\
\quad \quad stale         & 440     & 1527   & 4858     & 6343    & 963     \\
\quad \quad \quad max age & 0.400s  & 0.304s & 0.288s   & 0.216s  & 0.148s  \\
\quad \quad \quad med age & 0.032s  & 0.032s & 0.040s   & 0.032s  & 0.036s  \\
\quad \quad fresh         & 95640   & 87348  & 60984    & 25426   & 1219    \\
\quad \quad fresh ratio   & 99\%    & 98\%   & 92\%     & 80\%    & 56\%    \\
inserts                   & 498     & 898    & 4822     & 9079    & 30599   \\
deletes                   & 113     & 653    & 953      & 5781    & 19378   \\
\hline \hline
naive hits                & 28687   &33720   & 11555    & 8367    & 741     \\
naive misses              & 70269   &64244   & 78397    & 71931   & 32701   \\
naive hit ratio           & 29\%    &34\%    & 12\%     & 10\%    & 2\%     \\
\end{tabular}

  \printbibliography
\end{document}